\documentclass[12pt]{article}

\usepackage[dvipsnames]{xcolor}
\usepackage{amsmath,amsfonts,amsthm,amssymb,graphicx,arydshln,umoline,appendix,newtxtext,newtxmath,subcaption,physics,color,setspace,enumerate,paralist,comment,mathtools,tikz,bm}
\usepackage{booktabs}
\usepackage{natbib}
\setcitestyle{authoryear,open={(},close={)}}

\usepackage[left=1.15in,right=1.15in,top=1.15in,bottom=1.15in]{geometry}
\usepackage{hyperref}
\hypersetup{colorlinks=true, linkcolor=red, citecolor=blue}

\newtheorem{lemma}{Lemma}
\newtheorem*{lemma*}{Lemma}

\newtheorem{proposition}{Proposition}

\newtheorem{corollary}[proposition]{Corollary}

\newtheorem*{claim*}{Claim}

\theoremstyle{definition}

\newtheorem{example}{Example}

\theoremstyle{remark}

\newcommand{\R}{\mathbb{R}}

\newcommand{\calE}{\mathcal{E}}

\newcommand{\argmax}{\mathop{\rm arg~max}\limits}

\DeclareMathOperator{\supp}{supp}

\title{Information Acquisition with $\alpha$-Divergence Costs\thanks{This work is supported by Grant-in-Aid for Scientific Research Grant Numbers 24K04796 and 25K00619.  I~am grateful to Tommaso Denti for valuable suggestions.}}

\author{Takashi Ui\thanks{Email: \texttt{oui@econ.hit-u.ac.jp}. Postal address: HIAS, 2-1 Naka, Kunitachi, Tokyo 186-8601.}\\\small Kanagawa  University, Yokohama 221-8686, Japan\\\small Hitotsubashi  University, Tokyo 186-8601, Japan}

\date{July 2026}
\begin{document}

\maketitle

\begin{abstract}Building on the $f$-information model of \citet{bdp2025}, this paper introduces a one-parameter family of information acquisition models and characterizes optimal information acquisition. This family extends the mutual information model \citep{matejkaMckay2015} while preserving its analytical tractability. The information cost is derived from the $\alpha$-divergence, which nests the KL divergence ($\alpha=-1$), the reverse KL divergence ($\alpha=1$), and the squared Hellinger distance ($\alpha=0$), and is represented in closed form via the $\alpha$-integration of \citet{amari2007}. The optimal choice probabilities belong to the $q$-exponential family, which appears in nonextensive statistical mechanics \citep{tsallis1988} and in the $q$-logit model of traffic route choice \citep{nakayama2013}. This family reduces to the modified logit in the mutual information case \citep{matejkaMckay2015}. We further show that the relationship between payoffs and the set of actions chosen with positive probability in each state changes qualitatively across ranges of $\alpha$.
\end{abstract}
\newpage

\tableofcontents
\newpage
\section{Introduction}

This paper introduces a one-parameter extension of the mutual-information model of rational inattention. Since \citet{sims2003}, mutual information has served as the canonical specification of information costs in rational inattention. In discrete-choice settings, \citet{matejkaMckay2015} show that this specification yields a modified-logit choice rule, making the model highly tractable. This tractability is tied, however, to a specific functional form that imposes particular restrictions on how choice probabilities respond to payoffs. The one-parameter family introduced in this paper retains the tractability of the mutual-information benchmark while relaxing these restrictions parsimoniously.

We build on the $f$-information model of \citet{bdp2025} (henceforth BDP). BDP introduce a general family of information costs based on $f$-divergences. The $f$-information cost is defined as the value of a minimization problem over reference distributions on signal realizations. BDP exploit this variational structure through convex duality and characterize optimal behavior in terms of the convex conjugate of $f$. Their characterization theorem assumes, in particular, that $f$ is co-finite.

Our model specializes BDP's framework to the $\alpha$-divergence. The $\alpha$-divergence is a one-parameter family of $f$-divergences that nests the KL divergence $(\alpha=-1)$, the reverse KL divergence $(\alpha=1)$, and the squared Hellinger distance $(\alpha=0)$ as special cases, and is central to information geometry \citep{amari2016}. We refer to the resulting $f$-information cost as $\alpha$-information. Since co-finiteness fails for $\alpha>-1$, BDP's characterization theorem does not directly apply in this range. We therefore derive the $\alpha$-information in closed form using the $\alpha$-integration of \citet{amari2007} and use this representation to establish optimality conditions directly.\footnote{The $\alpha$-divergence case is a power-divergence specialization of Csisz\'ar $f$-informativity \citep{csiszar1972}, i.e., the symmetric separable case of BDP's $f$-information. Its minimizing reference marginal is closely related to the power-mean center studied in the literature on information radius \citep{sibson1969,nakiboglu2019}.}

The main result of this paper is a
necessary and sufficient condition for the optimal choice rule, nesting the
optimality condition of \citet{matejkaMckay2015} and
\citet{caplinDeanLeahy2019} for the mutual information cost as a special case.
The optimal choice probabilities belong to the $q$-exponential family, a class
of distributions defined in terms of the $q$-exponential function introduced
by \citet{tsallis1988} in nonextensive statistical mechanics, where it
replaces the exponential in the Boltzmann distribution. The same family appears
in the $q$-logit analysis of traffic route choice \citep{nakayama2013}. 
When $\alpha=-1$, this family reduces to the modified logit form. 
The $q$-exponential function has the same functional form as the inverse of a CRRA utility function, after identifying the CRRA coefficient with $q$.

Our optimality condition also identifies how $\alpha$ changes the set of actions chosen with positive probability in each state, which we call the conditional choice support.  The union of these supports across states is referred to as the consideration set \citep{caplinDeanLeahy2019}.  For $\alpha<-1$, within the co-finite region, each conditional choice support has a cutoff structure: actions in the consideration set are chosen if and only if their payoffs exceed a state-contingent cutoff.  At $\alpha=-1$, the mutual-information case, each conditional choice support coincides with the consideration set.  This also holds for $-1<\alpha<1$, where co-finiteness fails. 
When $\alpha\geq 1$, however, a qualitatively different pattern emerges.  Actions in the common conditional support (i.e., the intersection of conditional choice supports across states) are chosen with positive probability in every state, 
while actions outside the common conditional support are chosen in a state only if they attain the highest payoff among all actions in that state and that payoff is sufficiently high. This implies that an action outside the common support can yield a higher payoff than every action in the common support and nevertheless receive zero probability because another action yields a still higher payoff. Thus, conditional choice no longer follows a simple payoff-cutoff structure: the model permits the persistent use of common actions together with highly selective, state-dependent choice among actions outside the common support.

To further explore the behavioral role of $\alpha$, we study a guess-the-state problem in which the decision maker places a bet on the state and receives a reward if the bet matches the realized state. The optimality condition yields a response function that links reward levels to choice accuracy. Under the mutual-information benchmark ($\alpha=-1$), accuracy is highly responsive to rewards, whereas larger values of $\alpha$ imply weaker responsiveness. The experimental results of \citet{deanNeligh2023} on rational inattention are therefore consistent with $\alpha>-1$, because subjects are less responsive to incentives than the mutual-information benchmark predicts. We use their individual-level observations to assess the approximate magnitude of $\alpha$ consistent with the data. Allowing the cost scale to differ across subjects while keeping $\alpha$ common, we obtain an estimate of $\alpha$ around $3$, well above the mutual-information benchmark.

Section \ref{model} introduces the model. Section \ref{integration} derives the $\alpha$-information cost in closed form. Section \ref{qexponential} establishes that the optimal choice probabilities belong to the $q$-exponential family. Section \ref{OPT} characterizes the optimal rule. 
Section \ref{sec:guess-state} discusses the optimal rule of a guess-the-state problem. 
Section \ref{BDP} compares our optimality condition with that of BDP. Proofs are collected in the appendix.

\section{Information acquisition}\label{model}

Let $\Theta$ be a finite state space. Each state $\theta\in\Theta$ is drawn according to a full-support prior $\pi\in \Delta(\Theta)$.
Let $A$ be a finite action set.  A stochastic choice rule is a family
$
  P=(P_\theta)_{\theta\in\Theta}$, where $P_\theta\in \Delta(A)$. 
The payoff from action $a$ in state $\theta$ is $u(a,\theta)\in\R$.

For a stochastic choice rule $P$, the $\alpha$-information is defined by
\begin{equation}
  I_\alpha(P)
  =
  \min_{m\in\Delta(A)}
  \sum_{\theta\in\Theta}\pi(\theta)D_\alpha[P_\theta:m],
  \label{eq:Ialpha}
\end{equation}
where $D_\alpha[P_\theta:m]$ is the $\alpha$-divergence from $P_\theta$ to $m$. 
The $\alpha$-divergence \citep{Chernoff1952} is defined as follows, where we adopt the   parameterization in \citet{amari2016}. 
If $\alpha\ne\pm1$, 
\begin{equation}
  D_\alpha[p:q]
  \equiv 
  \frac{4}{1-\alpha^2}
  \left(
    1-
    \sum_{a\in A}
    p(a)^{(1-\alpha)/2}q(a)^{(1+\alpha)/2}
  \right)
\notag 
\end{equation}
for $p,q\in\Delta(A)$, 
where we adopt the conventions $0\cdot+\infty =0$, $1/0=+\infty$, and $-\log 0=+\infty$.
If $\alpha=\pm 1$, 
\begin{align*}
  D_{-1}[p:q]&\equiv \lim_{\alpha\to -1}   D_{\alpha}[p:q]
  =
  \sum_{a\in A}p(a)\log\frac{p(a)}{q(a)},\\
  D_1[p:q]&\equiv \lim_{\alpha\to 1}   D_{\alpha}[p:q]
  =
  \sum_{a\in A}q(a)\log\frac{q(a)}{p(a)},
\end{align*}
which are the KL divergence and the reverse KL divergence, respectively.
The $\alpha$-information is a special case of the $f$-information of \citet{bdp2025} (henceforth, BDP), which is discussed in Section \ref{BDP}. 

An information acquisition problem is 
\begin{equation}
  \max_{P\in\Delta(A)^\Theta}\ 
  \sum_{\theta\in\Theta}\pi(\theta)
  \sum_{a\in A}P_\theta(a)u(a,\theta)
  -\kappa I_\alpha(P),
  \label{eq:BDP-alpha}
\end{equation}
where $\kappa>0$ is a unit cost of information.

To provide an optimality condition, we introduce the auxiliary objective
\begin{equation}
  \Phi(P,m)
  =
  \sum_{\theta\in\Theta}\pi(\theta)
  \sum_{a\in A}P_\theta(a)u(a,\theta)
  -
  \kappa
  \sum_{\theta\in\Theta}\pi(\theta)D_\alpha[P_\theta:m].
\notag  
\end{equation}
Note that 
\begin{equation}
  \max_{m\in\Delta(A)}\Phi(P,m)
  =
  \sum_{\theta}\pi(\theta)\sum_a P_\theta(a)u(a,\theta)
  -\kappa I_\alpha(P)\notag 
\end{equation}
by \eqref{eq:Ialpha}. 
Hence the original problem \eqref{eq:BDP-alpha} can be written as
\begin{equation}
  \max_P\max_m \Phi(P,m)
  =\max_m\max_P \Phi(P,m).\notag
\end{equation}
Thus, to solve \eqref{eq:BDP-alpha}, it is useful to consider 
\begin{equation}
  \max_P \Phi(P,m).\label{e-problem}
\end{equation}

\section{\texorpdfstring{The $\alpha$-information cost}{The alpha-information cost}}\label{integration}


For a positive vector $x\in\mathbb{R}_{++}^{\Theta}$, let
\[
M_\alpha[x]
=
h_\alpha^{-1}
\left[
\sum_{\theta\in\Theta}\pi(\theta)h_\alpha[x_\theta]
\right],
\]
where
\[
h_\alpha[t]
=
\begin{cases}
t^{(1-\alpha)/2} & \text{if } \alpha\neq 1, \\
\log(t) & \text{if } \alpha=1.
\end{cases}
\]
The value $M_\alpha[x]$ is referred to as the (weighted) $\alpha$-mean of $x$ (with respect to $\pi$).\footnote{ 
When $\alpha<1$, this definition extends to nonnegative vectors.}
When $\alpha=-1$, it is the arithmetic weighted mean; when $\alpha=1$, it is the geometric weighted mean. In addition, $\lim_{\alpha\to -\infty}M_\alpha(x)=\max_\theta x(\theta)$ and $\lim_{\alpha\to \infty}M_\alpha(x)=\min_\theta x(\theta)$.

The following proposition due to \citet{amari2007} characterizes the unique solution to \eqref{eq:Ialpha} in terms of the $\alpha$-mean. This solution is referred to as the $\alpha$-integration of $P$ (with respect to $\pi$).
For $P\in\Delta(A)^\Theta$, we write
\[
  U(P)=\bigcup_{\theta\in\Theta}\supp P_\theta,
  \qquad
  C(P)=\bigcap_{\theta\in\Theta}\supp P_\theta .
\]

\begin{proposition}\label{Amari proposition}
Assume either $\alpha<1$, or $\alpha\ge1$ and $C(P)\ne\emptyset$.
Define
\[
  Z_\alpha(P)
  =
  \begin{cases}
  \displaystyle
  \sum_{a\in U(P)}M_\alpha[P(a)]
  & \text{if } \alpha<1,\\[1.2em]
  \displaystyle
  \sum_{a\in C(P)}M_\alpha[P(a)]
  & \text{if } \alpha\ge 1,
  \end{cases}
\]
where $M_\alpha[P(a)]$ is the $\alpha$-mean of
$P(a)=(P_\theta(a))_{\theta\in\Theta}$.
Then the problem \eqref{eq:Ialpha} has a unique solution $m_\alpha^P$ given by
\[
  m_\alpha^P(a)
  =
  \begin{cases}
  \displaystyle
  \frac{M_\alpha[P(a)]}{Z_\alpha(P)}
  & \text{if either }\alpha<1 \text{ and } a\in U(P), \text{ or } \alpha\ge1 \text{ and } a\in C(P),\\
  0 
  & \text{otherwise.}
  \end{cases}
\]
If $\alpha\ge1$ and $C(P)=\emptyset$, then 
$ I_\alpha(P)=+\infty$. 
\end{proposition}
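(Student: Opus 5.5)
Fix $P$ and minimize $F(m)=\sum_\theta\pi(\theta)D_\alpha[P_\theta:m]$ over $m\in\Delta(A)$. We rewrite $F$ case by case so that the problem becomes maximizing (or minimizing) a separable function subject to $\sum_a m(a)=1$, and then solve it with a Lagrange multiplier or a Hölder/Jensen-type inequality. The closed form of the minimizer will then be read off.

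\emph{Case $\alpha\ne\pm1$.} Exchanging sums gives
\[
F(m)=\frac{4}{1-\alpha^2}\Bigl(1-\sum_{a}m(a)^{(1+\alpha)/2}\,\textstyle\sum_\theta\pi(\theta)P_\theta(a)^{(1-\alpha)/2}\Bigr),
\qquad \sum_\theta\pi(\theta)P_\theta(a)^{(1-\alpha)/2}=M_\alpha[P(a)]^{(1-\alpha)/2}.
\]
Write $s=(1+\alpha)/2$ and $c_a=M_\alpha[P(a)]$, so that $F(m)=\frac{4}{1-\alpha^2}\bigl(1-\sum_a m(a)^s c_a^{1-s}\bigr)$.

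For $-1<\alpha<1$ we have $0<s<1$ and a positive prefactor, so we maximize $\sum_a m(a)^s c_a^{1-s}$. Hölder's inequality (equivalently, concavity of $x^s y^{1-s}$) gives $\sum_a m(a)^s c_a^{1-s}\le(\sum_a m(a))^s(\sum_a c_a)^{1-s}=Z^{1-s}$. Equality holds if and only if $m\propto c$, and strict concavity gives uniqueness. The terms with $c_a=0$ are exactly those with $a\notin U(P)$, so the minimizer puts zero mass there.

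For $\alpha<-1$ we have $s<0$ and a negative prefactor, so we minimize $\sum_a m(a)^s c_a^{1-s}$. Under the conventions, $m(a)=0$ with $c_a>0$ gives $+\infty$. The reverse Hölder inequality, or Lagrange conditions on the strictly convex function $x^s y^{1-s}$ with $s<0$, gives $m\propto c$ on $U(P)$. For $a\notin U(P)$ we have $c_a=0$ and the term vanishes, so with the convention $0\cdot\infty=0$ we would check that assigning mass there only wastes mass; hence $m(a)=0$.

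For $\alpha>1$ we have $s>1$, $1-s<0$ and a negative prefactor, so we minimize $\sum_a m(a)^s c_a^{1-s}$. If some $P_\theta(a)=0$ with $m(a)>0$, then $P_\theta(a)^{(1-\alpha)/2}=+\infty$ and the divergence is infinite. So the feasible $m$ with finite value are supported on $C(P)$, and $C(P)=\emptyset$ forces $I_\alpha=+\infty$. On $C(P)$ all $c_a>0$, and convexity yields $m\propto c$ there, with value normalized by $Z=\sum_{C(P)}c_a$.

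\emph{Cases $\alpha=\pm1$.} For $\alpha=-1$ we have $F(m)=\text{const}-\sum_a \bar P(a)\log m(a)$ with $\bar P=\sum_\theta\pi P_\theta$, which is Gibbs' inequality, so $m=\bar P$, the arithmetic mean. For $\alpha=1$ we have $F(m)=\sum_a m(a)\bigl(\log m(a)-\sum_\theta\pi\log P_\theta(a)\bigr)=\sum_a m(a)\log\bigl(m(a)/g_a\bigr)$, where $g_a$ is the geometric mean, and $g_a=0$ off $C(P)$ makes the value infinite unless $m(a)=0$ there. Then $F(m)=D_{-1}[m:g/Z]-\log Z$, which is minimized uniquely at $m=g/Z$. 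Again $C(P)=\emptyset$ gives $+\infty$.

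The main obstacle is the careful handling of boundary and zero cases under the stated conventions, especially for $\alpha<-1$ and $\alpha>1$, where infinite terms appear. For these I would argue directly that any $m$ with the wrong support has a value at least as large (or infinite), and then apply strict convexity on the relevant face to get uniqueness.
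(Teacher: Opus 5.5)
The paper does not prove this proposition itself. It attributes the result to \citet{amari2007} and moves straight to the closed form of $I_\alpha$, so there is no in-paper argument to compare yours against.

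Your proof is correct and self-contained. You rewrite the objective as $\frac{4}{1-\alpha^2}\bigl(1-\sum_a m(a)^s c_a^{1-s}\bigr)$ with $c_a=M_\alpha[P(a)]$ and then apply H\"older/Jensen. This gives global optimality, uniqueness, and the support restrictions ($U(P)$ for $\alpha<1$, $C(P)$ for $\alpha\ge1$) in a single step. Substituting $m=c/Z$ back gives $\sum_a m(a)^s c_a^{1-s}=Z^{(1-\alpha)/2}$, which is exactly the closed form of $I_\alpha$ the paper states right after the proposition. A derivation based only on stationarity conditions would identify the $\alpha$-mean as a critical point, but would still need separate arguments for these boundary and uniqueness issues. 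Your treatment of $\alpha=\pm1$ is also right: Gibbs' inequality for $\alpha=-1$, and the decomposition $F(m)=D_{-1}[m:g/Z]-\log Z$ for $\alpha=1$.

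A few points to tighten:
\begin{itemize}
\item For $-1<\alpha<1$, the map $m\mapsto\sum_a m(a)^s c_a^{1-s}$ is constant in the coordinates with $c_a=0$, so it is not strictly concave on $\Delta(A)$. Uniqueness should come from the equality case of H\"older, which forces $m\propto c$ and hence $m(a)=0$ off $U(P)$.
\item The ``wasted mass'' step for $\alpha<-1$ can be made precise as follows. Let $\mu=\sum_{a\in U(P)}m(a)$. Jensen's inequality for the convex map $x\mapsto x^s$ with weights $c_a/Z$ gives $\sum_{a\in U(P)}m(a)^s c_a^{1-s}\ge \mu^s Z^{1-s}$. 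Since $s<0$, $\mu^s>1$ whenever $\mu<1$, so any mass placed off $U(P)$ strictly raises the objective. The same bound with the inequality reversed and $\mu^s\le 1$ covers $-1<\alpha<1$.
\item For $\alpha<-1$ and $\alpha>1$, interchanging the sums over $\theta$ and $a$ is legitimate. Under the paper's conventions every term $P_\theta(a)^{(1-\alpha)/2}m(a)^{(1+\alpha)/2}$ lies in $[0,+\infty]$.
\end{itemize}
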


Using this proposition, we obtain the $\alpha$-information in the following closed form:
\[
  I_\alpha(P)
  =
  \begin{cases}
\displaystyle\sum_{\theta\in\Theta}\pi(\theta)
  \sum_{a\in U(P)}
  P_\theta(a)
  \log
  \frac{P_\theta(a)}
       {\sum_{\tau\in\Theta}\pi(\tau)P_\tau(a)} &\text{ if  $\alpha=-1$,}\\
\displaystyle  -\log
  \sum_{a\in C(P)}
  \prod_{\theta\in\Theta}
  P_\theta(a)^{\pi(\theta)} & \text{ if $\alpha=1$ and $C(P)\ne\emptyset$,}\\
 \displaystyle
  \frac{4}{1-\alpha^2}
  \left[
    1-
    Z_\alpha(P)^{(1-\alpha)/2}
  \right]  	 & \text{if either $\alpha<1$ and $\alpha\neq -1$,}\\
  & \text{ \quad or $\alpha>1$ and $C(P)\neq\emptyset$,}\\
+  \infty & \text{otherwise.} 
  \end{cases}
\]

The $\alpha$-information cost differs from the posterior-separable power costs used in the rational-inattention literature. 
Given a choice rule $P$, we denote the marginal probability of $a$ by
$
P_\pi(a)
=
\sum_{\theta\in\Theta}\pi(\theta)P_\theta(a)$.
Whenever $P_\pi(a)>0$, the posterior induced by $a$ is
$
\gamma_a^P(\theta)
\equiv 
{\pi(\theta)P_\theta(a)}/{P_\pi(a)}$. 
A (uniformly) posterior-separable cost function has the form
\begin{equation}\label{UPScost}
C(P)
=
\kappa\ 
\qty(\sum_{a\in  \supp P_\pi}
P_\pi(a)H(\gamma_a^P)-H(\pi)),
\end{equation}
where $\kappa>0$ and $H$ is a convex function assigning a cost to each posterior $\gamma_a^P$ \citep{caplinDeanLeahy2022}.\footnote{BDP show that posterior-separable costs are contained in the broader class of $f$-information costs.} 
For $s\neq 0,1$, let \(H_s\) be given by
\[
H_s(\gamma)
\equiv 
\frac{\sum_{\theta\in\Theta}\gamma(\theta)^s-1}{s(s-1)},
\]
which is negative Tsallis entropy divided by \(s\)
\citep{tsallis1988}.\footnote{As \(s\to 1\),
\(H_s(\gamma)\) converges to
\(\sum_{\theta\in\Theta}\gamma(\theta)\log \gamma(\theta)\),
the negative of Shannon entropy.}
The associated power-form information cost is
\begin{equation}
	\label{power cost}
C_s(P)
=
\kappa
\qty(
\sum_{a\in \supp P_\pi}
P_\pi(a)
\frac{
\sum_{\theta\in\Theta}\gamma_a^P(\theta)^s-1
}{s(s-1)}
-
\frac{
\sum_{\theta\in\Theta}\pi(\theta)^s-1
}{s(s-1)}
),
\end{equation}
which is discussed
and used in \citet{caplinDeanLeahy2017,caplinDeanLeahy2022},
\citet{dewanNeligh2020}, and \citet{deanNeligh2023}, among others. 
In Section \ref{UPS}, we discuss information acquisition with this cost and compare its optimal rule with that under the $\alpha$-information cost.

\section{\texorpdfstring{The $q$-exponential family}{The q-exponential family}}\label{qexponential}

An optimal choice rule $P$ is a solution to \eqref{e-problem} when $m=m_\alpha^P$ by Proposition \ref{Amari proposition}.  
In addition, for each $\theta$, $P_\theta\in \Delta(A)$ is a solution to 
\begin{equation}
\max_{p\in \Delta(A)} 
  \sum_{a\in A}p(a)u(a,\theta)
  -
  \kappa D_\alpha[p:m].
  \label{e-problem state}
\end{equation}
To describe an optimality condition for this problem, 
we use the following function:
\begin{equation}
\exp_q(x)=
\begin{cases}
[1+(1-q)x]_+^{1/(1-q)}, & \text{ if } q\neq1\\
\exp(x) & \text{ if } q=1,
\end{cases}\notag
\end{equation}
where
$
[x]_+=\max\{x,0\}$. 
This function is a deformation of the exponential known as the $q$-exponential, introduced in nonextensive statistical mechanics \citep{tsallis1988}.
On the range where $\exp_q$ is positive, its inverse function is denoted by 
\[
\log_q x
=
\begin{cases}
\dfrac{x^{1-q}-1}{1-q} & \text{ if } q\neq 1,\\
\log x & \text{ if } q=1,
\end{cases}
\]
which has the same functional form as the CRRA utility function.

For $m\in\Delta(A)$, let $\bar \lambda_{\theta m}$ denote the unique solution to 
\begin{equation}
\sum_{a\in S_m}
  m(a)
  \exp_{q_\alpha}\left(
  \frac{u(a,\theta)-\lambda}{\kappa}
  \right)
  =
  1,\label{normalization1}
\end{equation}
where 
\[  q_\alpha\equiv\frac{3+\alpha}{2}.
\]
Using $\bar \lambda_{\theta m}$, the following proposition characterizes a solution to \eqref{e-problem state}.

\begin{proposition}\label{q-logit proposition}
For $m\in\Delta(A)$ and $\alpha\in \R$,  let
$
  S_m\equiv \supp m$. 
For each $\theta\in \Theta$, let $P_\theta\in \Delta(A)$ be a solution to \eqref{e-problem state}. 

\begin{enumerate}
\item Suppose that $\alpha\le -1$ {\em ($q_\alpha\leq 1$)}, or $\alpha>-1$  {\em ($q_\alpha> 1$)} and $S_m=A$.
Then, $P_\theta$ is uniquely given by
\begin{equation}
  P_\theta(a)
  =
  m(a)
  \exp_{q_\alpha}\left(
  \frac{u(a,\theta)-\lambda_\theta}{\kappa}
  \right)\quad \text{ for all } a\in A,
\label{q ex family}	
\end{equation}
where $\lambda_\theta=\bar \lambda_{\theta m}$. 
When $\alpha<-1$, $\supp P_\theta\subseteq S_m$.  When $\alpha\geq -1$, $\supp P_\theta=S_m$.  

\item 
Suppose that $\alpha>-1$ and $S_m\subsetneq A$.
For each $\theta\in\Theta$, define
\begin{equation}
  \bar\lambda_{\theta m}^0
\equiv 
  \max_{a\not\in S_m}u(a,\theta)-\frac{2\kappa}{1+\alpha}.\label{zero m lambda}	
\end{equation}
\begin{enumerate}[\em (a)]
\item If $\bar \lambda_{\theta m}\geq \bar\lambda_{\theta m}^0$, $P_\theta$ is uniquely given by \eqref{q ex family}, where $\lambda_\theta=\bar \lambda_{\theta m}$ (we use the convention $1/0=+\infty$ and $0\cdot +\infty =0$).

\item If $\bar \lambda_{\theta m}< \bar\lambda_{\theta m}^0$, then $P_\theta$ is a solution if and only if 
\begin{equation}
  P_\theta(a)
  =
  \begin{cases}
  \displaystyle
  m(a)
  \exp_{q_\alpha}\left(
  \frac{u(a,\theta)-\bar\lambda_{\theta m}^0}{\kappa}
  \right)
  & \text{ if } a\in S_m,\\
  0
  & \displaystyle \text{ if } a\notin S_m \text{ and } u(a,\theta)<\max_{a\not\in S_m} u(a,\theta),
  \end{cases}\notag 
\end{equation}
and the remaining probability mass is assigned among the actions
$a\notin S_m$ attaining $\max_{a\not\in S_m}u(a,\theta)$.
The solution is unique if this set is a singleton, and otherwise it is not unique.
\end{enumerate}

\end{enumerate}
\end{proposition}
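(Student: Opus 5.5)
Problem \eqref{e-problem state} is a concave maximization over the simplex for fixed $m$: the objective is linear in $p$ minus $\kappa D_\alpha[p:m]$, and $D_\alpha[\cdot:m]$ is convex in $p$. So I would characterize solutions through the Karush--Kuhn--Tucker conditions. Write $\beta=(1-\alpha)/2$ and separate the terms with $a\in S_m$ from those with $a\notin S_m$.

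For $a\in S_m$ and $\alpha\ne\pm1$, the summand of $\kappa D_\alpha$ is $-\frac{4\kappa}{1-\alpha^2}p(a)^{\beta}m(a)^{1-\beta}$ up to a constant, with marginal value
\[
\frac{2\kappa}{1+\alpha}\,(p(a)/m(a))^{-(1+\alpha)/2}.
\]
For $a\notin S_m$ (so $m(a)=0$), the contribution depends on the sign of $1+\alpha$:
\begin{itemize}
\item If $\alpha<-1$, then $1-\beta<0$, so $m(a)^{1-\beta}=+\infty$ and any $p(a)>0$ gives infinite cost. Hence $p(a)=0$, i.e.\ $\supp P_\theta\subseteq S_m$.
\item If $\alpha=-1$, the KL divergence is infinite, with the same conclusion.
\item If $\alpha>-1$, the term $p(a)^\beta m(a)^{1-\beta}$ vanishes. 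Then $D_\alpha$ is affine in $p(a)$: it contributes $\frac{4}{1-\alpha^2}p(a)$ through the constant $1=\sum p$. Writing the "1" as $\sum_a p(a)$, the marginal cost of mass on $a\notin S_m$ is the constant $\frac{4\kappa}{1-\alpha^2}\cdot\frac{1-\alpha}{2}\cdot\ldots$.
\end{itemize}
Rather than track constants through this expression, I would compute the one-sided derivative directly from $D_\alpha$. I expect it to equal $2\kappa/(1+\alpha)$, which is exactly the threshold appearing in \eqref{zero m lambda}. The cases $\alpha=\pm1$ are handled by the same computation with logs, or by continuity.

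With a multiplier $\lambda$ for $\sum_a p(a)=1$, stationarity on $S_m$ reads $u(a,\theta)-\lambda = \kappa\,\partial_{p(a)}D_\alpha$. Inverting the marginal-cost map, expressed via $\log_{q_\alpha}$ of $p(a)/m(a)$ up to the additive constant, gives
\[
p(a)=m(a)\exp_{q_\alpha}\!\big((u(a,\theta)-\lambda)/\kappa\big),
\]
with $q_\alpha=(3+\alpha)/2$. The key check is that $\log_q x=(x^{1-q}-1)/(1-q)$ with $1-q=-(1+\alpha)/2$ matches the derivative above after shifting $\lambda$ by the constant.

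The positive part $[\cdot]_+$ matters only when $q_\alpha<1$. There the Inada condition fails at $p=0$, since the marginal cost at $0$ is finite. The complementary slackness inequality for $p(a)=0$ is exactly $u(a,\theta)-\lambda\le$ the threshold, which the truncation encodes. This gives $\supp P_\theta\subseteq S_m$ for $\alpha<-1$, and full support on $S_m$ for $\alpha\ge-1$, where the marginal cost blows up at $0$. Because $\exp_{q}$ is increasing and continuous in $x$, with limits $0$ (or hitting $0$) and $\infty$, the left side of \eqref{normalization1} is strictly decreasing in $\lambda$ wherever it is positive. So $\bar\lambda_{\theta m}$ exists and is unique. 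Strict concavity of the objective on $S_m$ gives uniqueness in part 1, using that for $S_m=A$ no outside actions exist.

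For part 2, the outside actions all have the constant marginal cost $2\kappa/(1+\alpha)$, so the KKT conditions require $u(a,\theta)-\lambda\le 2\kappa/(1+\alpha)$ for every $a\notin S_m$. This means $\lambda\ge\bar\lambda^0_{\theta m}$, with equality required whenever some outside action gets positive mass. Since the mass on $S_m$ at multiplier $\lambda$ is decreasing in $\lambda$, two cases arise:
\begin{enumerate}[(a)]
\item If $\bar\lambda_{\theta m}\ge\bar\lambda^0_{\theta m}$, then $\lambda=\bar\lambda_{\theta m}$ is feasible, outside mass is zero, and uniqueness follows from concavity.
\item Otherwise, mass on $S_m$ at $\bar\lambda^0_{\theta m}$ is less than $1$. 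The remainder must go to outside actions, and only to maximizers of $u(\cdot,\theta)$ off $S_m$ by complementary slackness. The objective is linear on that set, which gives non-uniqueness exactly when there are ties.
\end{enumerate}
Sufficiency follows from KKT plus concavity.

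The main obstacle is the careful boundary and derivative bookkeeping. This means correctly deriving the constant $2\kappa/(1+\alpha)$ for outside actions, including the conventions $0\cdot\infty$ at $\alpha\ge1$, and treating $\alpha=\pm1$ uniformly with the general formula.
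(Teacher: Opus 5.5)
Your proposal is correct and follows the same route as the paper. Both use the KKT conditions for the concave statewise problem, the $\exp_{q_\alpha}$ formula on active coordinates, and finite-cost and Inada arguments for the supports; both then separate cases (a) and (b) by the monotonicity of the mass on $S_m$ in the multiplier. The constant $2\kappa/(1+\alpha)$, which you leave as an expectation, is obtained in the paper without rewriting the ``1'': for $\alpha>-1$ the terms with $m(a)=0$ contribute nothing to $D_\alpha$, so the KKT condition for $a\notin S_m$ is simply $u(a,\theta)\le\nu_\theta$ (with equality if $p(a)>0$), and the threshold is exactly the shift $\lambda_\theta=\nu_\theta-2\kappa/(1+\alpha)$ that turns the active-coordinate condition $u(a,\theta)+\frac{2\kappa}{1+\alpha}(m(a)/p(a))^{(1+\alpha)/2}=\nu_\theta$ into $\exp_{q_\alpha}$ form.
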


A probability distribution of the form \eqref{q ex family} is referred to as the $q$-exponential family \citep{naudts2009,amariOhara2011} and is relevant for the statistical description of small isolated systems in the field of statistical physics \citep{tsallis2023}.  
When $m(a)$ is the uniform distribution over $A$, the choice rule in this proposition is referred to as the $q$-logit choice rule, which is used in the study of route choice problems \citep{nakayama2013}.

\begin{example}
Let 
$
  \Theta=\{1,2,3\}$, $A=\{a,b,c\}$, $\pi(1)=\pi(2)=\pi(3)=1/3$, $\alpha=3$, and $\kappa=1$, where $q_\alpha=3$, $h_3(t)=t^{-1}$, $\exp_3(x)=[1-2x]_+^{-1/2}$, and 
$
  h_3(\exp_3(x))=[1-2x]_+^{1/2}$.
A payoff function is given by Table \ref{tab:payoff}. 
Fix $m=(0,0,1)\in \Delta(A)$, 
where the order of actions is \((a,b,c)\), and $
  S_m=\operatorname{supp}m=\{c\}$. 
We calculate the solutions to the statewise problems \eqref{e-problem state} using Proposition~\ref{q-logit proposition}.

\begin{table}[t]
\centering
\begin{tabular}{c|ccc}
 & \text{ state }1 & \text{ state }2 & \text{ state }3\\
\hline
\text{ action }$a$ & 2 & 1 & -7\\
\text{ action }$b$ & 1 & 2 & -8\\
\text{ action }$c$ & 0 & 0 & 0\\
\end{tabular}
\caption{State-dependent payoffs}
\label{tab:payoff}
\end{table}

	The equation \eqref{normalization1} is
\[
  m(c)\exp_3(u(c,\theta)-\lambda)=  \exp_3(-\lambda)=1,
\]
which gives
$
  \bar\lambda_{1m}=\bar\lambda_{2m}=\bar\lambda_{3m}=0$. 
Since \(2\kappa/(1+\alpha)=1/2\), equation \eqref{zero m lambda} gives
\[
  \bar\lambda^0_{1m}
  =
  \max\{2,1\}-\frac12
  =
  \frac32,
  \quad
  \bar\lambda^0_{2m}
  =
  \max\{1,2\}-\frac12
  =
  \frac32,
  \quad
  \bar\lambda^0_{3m}
  =
  \max\{-7,-8\}-\frac12
  =
  -\frac{15}{2}.
\]
\begin{itemize}
\item For state \(1\), $\lambda_1
  =
  \max\{\bar\lambda_{1m},\bar\lambda^0_{1m}\}
  =\bar\lambda^0_{1m}=
  3/2$. Thus,
\[
\begin{aligned}
  P_1(c)
  =
  m(c)\exp_3(u(c,1)-\lambda_1)  =
  \exp_3\left(-\frac32\right)  =
  [1+3]^{-1/2}  =
  \frac12,
\end{aligned}
\]
and the residual mass $
  1-P_1(c)=1/2$ is assigned to the unique payoff maximizer among the zero-\(m\) actions.  
Since
$
  u(a,1)=2>1=u(b,1)$, 
we have $P_1=(1/2,0,1/2)$.

\item For state \(2\), $  \lambda_2
  =
  \max\{\bar\lambda_{2m},\bar\lambda^0_{2m}\}
  =\bar\lambda^0_{2m}=
  3/2$. Thus, 
\[
\begin{aligned}
  P_2(c)
  =
  m(c)\exp_3(u(c,2)-\lambda_2)  =
  \exp_3\left(-\frac32\right)  =
  [1+3]^{-1/2}  =
  \frac12,
\end{aligned}
\]
and the residual mass $
  1-P_2(c)=1/2$ is assigned to the unique payoff maximizer among the zero-\(m\) actions.  
Since
$
  u(b,2)=2>1=u(a,2)$, 
we have $P_2=(0,1/2,1/2)$.

\item For state \(3\), 
$
  \lambda_3=\max\{ \bar\lambda_{3m},\bar\lambda^0_{3m}\}=\bar\lambda_{3m}=0$. Thus,   
\[
\begin{aligned}
  P_3(c)
  =
  m(c)\exp_3(u(c,3)-\lambda_3) =
  \exp_3(0) =
  1,
\end{aligned}
\]
which implies $
  P_3=(0,0,1)$.
\end{itemize}
In summary, by Proposition~\ref{q-logit proposition}, the unique solutions to the statewise problems \eqref{e-problem state} at \(m=(0,0,1)\) are $P_1=(1/2,0,1/2)$, $P_2=(0,1/2,1/2)$, and $
  P_3=(0,0,1)$. 
  \end{example}

\section{Optimal stochastic choice}\label{OPT}

\subsection{The optimality condition}
Based on Propositions \ref{Amari proposition} and \ref{q-logit proposition}, 
we provide a necessary and sufficient condition for optimality in the next proposition.\footnote{If $\alpha\ge1$ and $C(P)=\emptyset$, then $I_\alpha(P)=+\infty$, so such a stochastic choice rule cannot be optimal. Hence the next proposition concerns stochastic choice rules with finite $\alpha$-information.}

\begin{proposition}\label{main result prop}
A stochastic choice rule $P$ is optimal if and only if the following conditions
are satisfied.

\begin{enumerate}
\item For each $\theta\in\Theta$, $P_\theta$ is a solution to
\eqref{e-problem state} when $m=m_\alpha^P$, as characterized in
Proposition~\ref{q-logit proposition}. Let $\lambda_\theta$ denote the
constant associated with $P_\theta$ in that characterization.

\item If $\alpha\le 1$, then for every $a\in\supp m_\alpha^P$,
\[
\sum_{\theta\in\Theta}\pi(\theta)
h_\alpha\left[
\exp_{q_\alpha}\left(
\frac{u(a,\theta)-\lambda_\theta}{\kappa}
\right)
\right]
=
\max_{b\in A}
\sum_{\theta\in\Theta}\pi(\theta)
h_\alpha\left[
\exp_{q_\alpha}\left(
\frac{u(b,\theta)-\lambda_\theta}{\kappa}
\right)
\right].
\]

If $\alpha>1$, then for every $a\in\supp m_\alpha^P$,
\[
\sum_{\theta\in\Theta}\pi(\theta)
h_\alpha\left[
\exp_{q_\alpha}\left(
\frac{u(a,\theta)-\lambda_\theta}{\kappa}
\right)
\right]
=
\min_{b\in A}
\sum_{\theta\in\Theta}\pi(\theta)
h_\alpha\left[
\exp_{q_\alpha}\left(
\frac{u(b,\theta)-\lambda_\theta}{\kappa}
\right)
\right].
\]
\end{enumerate}
\end{proposition}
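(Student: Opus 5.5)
The plan is to use the joint maximization representation $\max_P\max_m\Phi(P,m)=\max_m\max_P\Phi(P,m)$ and to show that conditions 1 and 2 together say that $(P,m_\alpha^P)$ is a saddle-type joint optimum. The key structural fact is that, once the $\alpha$-divergence is transformed appropriately, $\Phi$ is jointly concave in $(P,m)$. Concretely, $\sum_\theta\pi(\theta)D_\alpha[P_\theta:m]$ is jointly convex in $(P,m)$ as an $f$-divergence, since the perspective $q f(p/q)$ is jointly convex. So $\Phi$ is jointly concave on the convex set $\Delta(A)^\Theta\times\Delta(A)$, and first-order (KKT) conditions in $(P,m)$ are necessary and sufficient for a joint maximum. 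Optimality of $P$ in \eqref{eq:BDP-alpha} is therefore equivalent to the existence of $m$ such that $(P,m)$ maximizes $\Phi$. By Proposition~\ref{Amari proposition}, that $m$ must be the unique $m_\alpha^P$ whenever $I_\alpha(P)<\infty$.

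\textbf{Step 1 (the $P$-block).} Given $m=m_\alpha^P$, the $P$-block of the first-order conditions decouples across states. Each block is exactly problem \eqref{e-problem state}, which is concave in $p$, and Proposition~\ref{q-logit proposition} already characterizes its solutions. This produces condition 1 together with the multipliers $\lambda_\theta$.

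\textbf{Step 2 (the $m$-block).} Fix $P$ and treat $\Phi$ as a concave function of $m$ on the simplex. For $\alpha\ne\pm1$ the only $m$-dependent part of $-\kappa\sum_\theta\pi(\theta)D_\alpha$ is
\[
\frac{4\kappa}{1-\alpha^2}\sum_a m(a)^{(1+\alpha)/2}\,G(a),\qquad G(a)=\sum_\theta\pi(\theta)P_\theta(a)^{(1-\alpha)/2}.
\]
Its derivative in $m(a)$ is proportional to $\frac{2\kappa}{1-\alpha}\,m(a)^{(\alpha-1)/2}G(a)$. On $\supp m$, the representation \eqref{q ex family} gives $P_\theta(a)=m(a)\exp_{q_\alpha}(\cdot)$. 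Substituting,
\[
m(a)^{(\alpha-1)/2}G(a)=\sum_\theta\pi(\theta)\,h_\alpha\Bigl[\exp_{q_\alpha}\Bigl(\frac{u(a,\theta)-\lambda_\theta}{\kappa}\Bigr)\Bigr],
\]
so the power of $m(a)$ cancels. The KKT conditions for $m$ on the simplex then say two things. First, this quantity, multiplied by the sign of $\frac{1}{1-\alpha}$, is constant on $\supp m$. Second, it is weakly dominated off the support. This is a max for $\alpha<1$ and a min for $\alpha>1$, because the sign of $1-\alpha$ flips the direction of the inequality. For $a\notin\supp m$ the quantity is evaluated through the $\exp_{q_\alpha}$ expression at $m(a)=0$, using the conventions of Proposition~\ref{q-logit proposition}. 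The cases $\alpha=\pm1$ are handled by the analogous log-derivative or by taking limits; for example, at $\alpha=-1$ one gets $\sum_\theta\pi(\theta)\exp((u-\lambda_\theta)/\kappa)$, which recovers the Matějka--McKay condition.

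\textbf{Main obstacle.} The main obstacle is the boundary behaviour, in which one-sided derivatives are infinite. For $\alpha>-1$ and $m(a)=0$, the factor $m(a)^{(\alpha-1)/2}$ blows up, so the naive KKT argument fails. There are also the zero-$m$ actions of case 2 of Proposition~\ref{q-logit proposition}, and for $\alpha\ge1$ the requirement $C(P)\ne\emptyset$. Rather than differentiate at the boundary, I would verify sufficiency directly. Given conditions 1 and 2, I would compare $\Phi(P',m')$ with $\Phi(P,m)$ through the concavity inequality along the segment $(P,m)+t[(P',m')-(P,m)]$, computing the right derivative at $t=0$ and showing that it is $\le0$; here condition 2 supplies exactly the needed sign for the $m$-direction. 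For necessity, I would perturb $m$ toward a point mass $\delta_b$ and read off the sign of the directional derivative, which is finite or $\pm\infty$ with the correct sign. Both directions require care with the $0\cdot\infty$ conventions, and that bookkeeping is where the work lies.
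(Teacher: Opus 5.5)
Your Step 2 does not establish condition 2, because the split of the joint first-order conditions into a $P$-block and an $m$-block fails at boundary points.

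\textbf{The $m$-block is vacuous.} At $m=m_\alpha^P$, the $m$-block with $P$ held fixed carries no information. Proposition~\ref{Amari proposition} says exactly that $m_\alpha^P$ maximizes $\Phi(P,\cdot)$, so every one-sided derivative of $\Phi(P,\cdot)$ there is automatically $\le 0$.
\begin{itemize}
\item On $\supp m$, the quantity $\sum_\theta\pi(\theta)h_\alpha[P_\theta(a)/m(a)]$ equals $h_\alpha[Z_\alpha(P)]$ for every $a$. So the equalization you derive holds for any $P$ satisfying condition 1.
\item For $b\notin\supp m$, the derivative of $\Phi(P,\cdot)$ in $m(b)$ is $0$ if $\alpha<1$ (then $P_\theta(b)=0$ for all $\theta$) and $-\infty$ if $\alpha\ge 1$. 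It is not $\frac{2\kappa}{1-\alpha}\sum_\theta\pi(\theta)h_\alpha[\exp_{q_\alpha}((u(b,\theta)-\lambda_\theta)/\kappa)]$.
\end{itemize}
The off-support inequality is the only nontrivial content of condition 2. Your phrase ``evaluated through the $\exp_{q_\alpha}$ expression at $m(a)=0$'' asserts it rather than derives it.

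\textbf{Block-wise optimality is genuinely insufficient.} This already shows up at $\alpha=-1$. Let $P_\theta=\delta_b$ for all $\theta$, and let $u(c,\theta)>u(b,\theta)$ for all $\theta$.
\begin{itemize}
\item $P$ is optimal given $m=\delta_b$, since Proposition~\ref{q-logit proposition} forces $P_\theta=\delta_b$.
\item $\delta_b=m_{-1}^P$ is optimal given $P$.
\item Yet along the joint segment $P_{t,\theta}=m_t=(1-t)\delta_b+t\delta_c$, the cost stays $0$ and $\Phi$ rises at rate $\sum_\theta\pi(\theta)(u(c,\theta)-u(b,\theta))>0$.
\item The two block derivatives in these directions are $-\infty$ and $-\kappa$.
\end{itemize}
So the obstacle is not infinite one-sided derivatives as such. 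It is that at boundary points the joint directional derivative is not the sum of the block derivatives. For the same reason, your necessity plan fails as stated: moving $m$ toward $\delta_b$ with $P$ fixed always gives a nonpositive derivative at $m_\alpha^P$.

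\textbf{The missing idea.} $P_\theta(b)$ must move together with $m(b)$, at the ratio $P_\theta(b)/m(b)=\exp_{q_\alpha}((u(b,\theta)-\lambda_\theta)/\kappa)$. This is what turns the marginal value of $m(b)$ into the $h_\alpha[\exp_{q_\alpha}(\cdot)]$ expression. The paper builds this in as follows.
\begin{itemize}
\item It works with the value function $W(m)=\max_Q\Phi(Q,m)$, which is concave by joint concavity of $\Phi$. Then $P$ is optimal iff $P$ solves the inner problem at $m_\alpha^P$ and $m_\alpha^P$ maximizes $W$.
\item It computes $D^+W$ by Danskin's theorem at $t>0$ along $m_t=(1-t)m+t\tilde m$, where $\supp m_t$ is constant.
\item It proves $\lambda_\theta(m_t)\to\lambda_\theta(m)$ as $t\downarrow 0$. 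The delicate case is $\alpha>-1$ with $\supp m\subsetneq A$, where $\lambda_\theta=\max\{\bar\lambda_{\theta m},\bar\lambda^0_{\theta m}\}$.
\item It transfers the derivative to $t=0$ using concavity.
\end{itemize}

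Your joint-segment idea could be made to work instead, but only with the step you are missing. Take an entering action $b$ with $m(b)=0=P_\theta(b)$, and let $\nu_\theta=\lambda_\theta+2\kappa/(1+\alpha)$ be the statewise multiplier. The contribution of $b$ to the right derivative is the perspective term $\sum_\theta\pi(\theta)[P'_\theta(b)(u(b,\theta)-\nu_\theta)+\frac{4\kappa}{1-\alpha^2}P'_\theta(b)^{(1-\alpha)/2}m'(b)^{(1+\alpha)/2}]$.
\begin{itemize}
\item For sufficiency, you must bound this term by $m'(b)\frac{2\kappa}{1-\alpha}\sum_\theta\pi(\theta)h_\alpha[\exp_{q_\alpha}((u(b,\theta)-\lambda_\theta)/\kappa)]$. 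This is a pointwise conjugate computation in $x=P'_\theta(b)/m'(b)$, whose maximizer is the $\exp_{q_\alpha}$ ratio.
\item For necessity, that maximizing ratio supplies the improving joint direction.
\end{itemize}
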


When $\alpha=-1$, the above optimality condition reduces to that of 
\citet{caplinDeanLeahy2019}. 
To see this, note that $q_{-1}=1$, $h_{-1}[t]=t$, and  
$
  m_{-1}^P(a)  =
\sum_{\theta\in\Theta}\pi(\theta)P_\theta(a)$,
which is the marginal choice probability. Thus, 
\[
  \exp_{q_\alpha}\left(
  \frac{u(a,\theta)-\lambda_\theta}{\kappa}
  \right)
  =
  \frac{\exp(u(a,\theta)/\kappa)}
  {\sum_{b\in A}m_{-1}^P(b)\exp(u(b,\theta)/\kappa)} .
\]
Then the first condition implies the modified logit formula
\[
  P_\theta(a)
  =
  \frac{
  m_{-1}^P(a)\exp(u(a,\theta)/\kappa)}
  {\sum_{b\in A}m_{-1}^P(b)\exp(u(b,\theta)/\kappa)} .
\]
The second condition implies, for every $a\in\supp m_{-1}^P$,
\[
\sum_{\theta\in\Theta}\pi(\theta)
\frac{\exp(u(a,\theta)/\kappa)}
{\sum_{b\in A}m_{-1}^P(b)\exp(u(b,\theta)/\kappa)}
=
\max_{c\in A}
\sum_{\theta\in\Theta}\pi(\theta)
\frac{\exp(u(c,\theta)/\kappa)}
{\sum_{b\in A}m_{-1}^P(b)\exp(u(b,\theta)/\kappa)} =1
\]
since 
\[
\sum_{c\in A}m_{-1}^P(c)
\sum_{\theta\in\Theta}\pi(\theta)
\frac{\exp(u(c,\theta)/\kappa)}
{\sum_{b\in A}m_{-1}^P(b)\exp(u(b,\theta)/\kappa)}
=1.
\]
This condition is equivalent to, for all $a\in A$,  
\[
\sum_{\theta\in\Theta}\pi(\theta)
\frac{\exp(u(a,\theta)/\kappa)}
{\sum_{b\in A}m_{-1}^P(b)\exp(u(b,\theta)/\kappa)}
\le1,
\]
which is precisely the necessary and sufficient condition of
\citet{caplinDeanLeahy2019}.

\addtocounter{example}{-1}
\begin{example}[continued]
We have shown that, when $m=(0,0,1)$, the unique solutions to the statewise problems \eqref{e-problem state} are
$
  P_1=\left(1/2,0,1/2\right)$, $
  P_2=(0,1/2,1/2)$, and $
  P_3=(0,0,1)$.  
We now show that this rule is an optimal rule by verifying the conditions in Proposition~\ref{main result prop}. 
Since $\alpha=3>1$ and
$
  C(P)=\bigcap_{\theta\in\Theta}\operatorname{supp}P_\theta=\{c\}$, 
the $\alpha$-integration $m_\alpha^P$ is supported only on $c$; that is, 
$
  m_\alpha^P=(0,0,1)$.
Thus the distribution $m=(0,0,1)$ used in the previous section is indeed the $\alpha$-integration of $P$. This implies that the first condition in Proposition~\ref{main result prop} holds.

It remains to check the second condition.  Since $\alpha>1$, action $c$ must minimize
\begin{equation}
  \sum_{\theta\in\Theta}
  \pi(\theta)
  h_3\left[
    \exp_3(u(d,\theta)-\lambda_\theta)
  \right]	\label{c_value}
\end{equation}
over $d\in A$.  
For $d=c$, the value is
\[
\begin{aligned}
  &
  \frac13 h_3\left[\exp_3\left(-\frac32\right)\right]
  +
  \frac13 h_3\left[\exp_3\left(-\frac32\right)\right]
  +
  \frac13 h_3[\exp_3(0)]
  \\
  &=
  \frac13[1+3]^{1/2}
  +
  \frac13[1+3]^{1/2}
  +
  \frac13[1]^{1/2}
  =
  \frac53.
\end{aligned}
\]
For $d=a$, the value is
\[
\begin{aligned}
  &
  \frac13 h_3\left[\exp_3\left(\frac12\right)\right]
  +
  \frac13 h_3\left[\exp_3\left(-\frac12\right)\right]
  +
  \frac13 h_3[\exp_3(-7)]
  \\
  &=
  \frac13[1-1]_+^{1/2}
  +
  \frac13[1+1]^{1/2}
  +
  \frac13[1+14]^{1/2}
  =
  \frac{\sqrt{2}+\sqrt{15}}{3}.
\end{aligned}
\]
For $d=b$, the value is
\[
\begin{aligned}
  &
  \frac13 h_3\left[\exp_3\left(-\frac12\right)\right]
  +
  \frac13 h_3\left[\exp_3\left(\frac12\right)\right]
  +
  \frac13 h_3[\exp_3(-8)]
  \\
  &=
  \frac13[1+1]^{1/2}
  +
  \frac13[1-1]_+^{1/2}
  +
  \frac13[1+16]^{1/2}
  =
  \frac{\sqrt{2}+\sqrt{17}}{3}.
\end{aligned}
\]
Since
\[
  \frac53
  <
  \frac{\sqrt{2}+\sqrt{15}}{3}
  <
  \frac{\sqrt{2}+\sqrt{17}}{3},
\]
action $c$ uniquely minimizes \eqref{c_value}.
Therefore the stochastic choice rule given by 
$
  P_1=\left(1/2,0,1/2\right)$, $
  P_2=(0,1/2,1/2)$, and $
  P_3=(0,0,1)$ is optimal, where the support of the $\alpha$-integration is $\{c\}$.  
  
 In state $1$,
$
  u(a,1)=2>1=u(b,1)>0=u(c,1)$, 
while
\[
  P_1(a)=\frac12>0,
  \qquad
  P_1(b)=0,
  \qquad
  P_1(c)=\frac12>0.
\]
Thus action $b$ is strictly better than the action $c$ in the support of the $\alpha$-integration in state $1$, but it receives zero probability, while $c$ receives positive probability.  This is because $b$ is not the best zero-$m$ action in that state.
Likewise, in state $2$,
$
  u(b,2)=2>1=u(a,2)>0=u(c,2)$, 
while
\[
  P_2(a)=0,
  \qquad
  P_2(b)=\frac12>0,
  \qquad
  P_2(c)=\frac12>0.
\]
Thus action $a$ is strictly better than the action $c$ in the support of the $\alpha$-integration in state $2$, but it receives zero probability, while $c$ receives positive probability.  This is because $a$ is not the best zero-$m$ action in that state.
\end{example}

\subsection{Conditional choice supports}

We discuss how an optimal choice rule $P$ depends on $\alpha$ by focusing on the support of $P_\theta$ for each $\theta$. 
The support of  $P_\theta$ and its union over states are referred to as the conditional choice support at $\theta$ and the consideration set, respectively. 
The following result characterizes the relationship between conditional choice supports and the $\alpha$-integration.

\begin{corollary}\label{supp coro}
Let $P$ be an optimal stochastic choice rule with the $\alpha$-integration $m=m_\alpha^P$. 
Define \[\nu_\theta \equiv \lambda_\theta +2\kappa/(1+\alpha)\]
for $\alpha\neq -1$, 
where $\lambda_\theta$ is the constant of the optimal choice rule given in Proposition \ref{main result prop}. 
The following statements hold, where $S_m\equiv\supp m$ and $S_\theta\equiv \supp P_\theta$.
\begin{enumerate}
\item If $\alpha<-1$, then
$
  S_\theta\subseteq S_m$ for every $\theta\in\Theta$ and $
  S_m=\bigcup_{\theta\in\Theta}S_\theta$. 
In addition, $a\in S_\theta$ if and only if $u(a,\theta)>\nu_\theta$ and $a\in S_m$.

\item If $-1\leq\alpha<1$, then
$
  S_\theta=S_m$ for every $\theta\in\Theta$. 

\item If $\alpha\geq1$, then $ S_m=\bigcap_{\theta\in\Theta}S_\theta$. 
For every $a\in S_m$ and $\theta\in\Theta$, $u(a,\theta)<\nu_\theta$. 
If $a\in S_\theta\setminus S_m$, then $u(a,\theta)=\nu_\theta$. 
\end{enumerate}
\end{corollary}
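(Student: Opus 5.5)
The plan is to combine three earlier results. Proposition~\ref{main result prop}(1) says each $P_\theta$ solves \eqref{e-problem state} at $m=m_\alpha^P$. Proposition~\ref{q-logit proposition} then describes $\supp P_\theta$ for that $m$. Proposition~\ref{Amari proposition} gives $S_m=U(P)$ when $\alpha<1$ and $S_m=C(P)$ when $\alpha\ge1$. Throughout I will use one elementary fact: for $q=q_\alpha\neq1$,
\[
\exp_q\bigl((u-\lambda)/\kappa\bigr)>0 \iff 1+(1-q)(u-\lambda)/\kappa>0,
\]
and $1-q_\alpha=-(1+\alpha)/2$. Consequently, for $\alpha<-1$ positivity is equivalent to $u>\lambda-2\kappa/(1-q)\cdot(-1)$, that is, to $u>\lambda+2\kappa/(1+\alpha)=\nu_\theta$. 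For $\alpha>-1$ it is equivalent to $u<\nu_\theta$.

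\textbf{Case $\alpha<-1$.} Part 1 of Proposition~\ref{q-logit proposition} applies, so $P_\theta(a)=m(a)\exp_{q_\alpha}((u(a,\theta)-\lambda_\theta)/\kappa)$ and $S_\theta\subseteq S_m$. By the fact above, $a\in S_\theta$ if and only if $a\in S_m$ and $u(a,\theta)>\nu_\theta$. Proposition~\ref{Amari proposition} gives $S_m=U(P)=\bigcup_\theta S_\theta$; this uses that $M_\alpha[P(a)]>0$ exactly when some $P_\theta(a)>0$, which holds for $\alpha<1$ because $h_\alpha$ is increasing with $h_\alpha(0)=0$.

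\textbf{Case $-1\le\alpha<1$.} If $S_m=A$, part 1 of Proposition~\ref{q-logit proposition} gives $S_\theta=S_m$ directly. If $S_m\subsetneq A$, I claim branch 2(b) cannot occur. In branch 2(b), $P_\theta$ puts positive mass on some $a\notin S_m$, so $a\in U(P)=S_m$, a contradiction. The one exception is when the residual mass is zero; but then $P_\theta$ is given by the $S_m$ formula with $\lambda=\bar\lambda^0_{\theta m}$, and normalization forces $\bar\lambda^0_{\theta m}=\bar\lambda_{\theta m}$ by uniqueness in \eqref{normalization1}, contradicting the strict inequality. Hence branch 2(a) holds, and $S_\theta=S_m$ still follows. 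To see this, note $\bar\lambda_{\theta m}\ge\bar\lambda^0_{\theta m}$ means $\nu_\theta\ge\max_{a\notin S_m}u(a,\theta)$, so every $a\in S_m$ is positive. The reason is that the normalization \eqref{normalization1} with $q>1$ forces $\exp_q$ to be finite, i.e.\ $u(a,\theta)<\nu_\theta$, for $a\in S_m$, while the $1/0$ convention covers $a\notin S_m$ only at the boundary. For $\alpha=-1$ the exponential is always positive, so this is immediate.

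\textbf{Case $\alpha\ge1$.} Proposition~\ref{Amari proposition} gives $S_m=C(P)=\bigcap_\theta S_\theta$, since finite cost forces $C(P)\ne\emptyset$. For $a\in S_m$, finiteness of $P_\theta(a)=m(a)\exp_q(\cdot)$ with $q\ge2$ requires $1+(1-q)(u-\lambda_\theta)/\kappa>0$, i.e.\ $u(a,\theta)<\nu_\theta$. For $a\in S_\theta\setminus S_m$, Proposition~\ref{q-logit proposition} forces branch 2(b), because in 2(a) an action outside $S_m$ gets $0\cdot(+\infty)=0$. In branch 2(b), $a$ attains $\max_{b\notin S_m}u(b,\theta)$ and $\lambda_\theta=\bar\lambda^0_{\theta m}$, so $u(a,\theta)=\bar\lambda^0_{\theta m}+2\kappa/(1+\alpha)=\nu_\theta$.

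The main obstacle is the bookkeeping in case 2 with $S_m\subsetneq A$: ruling out branch 2(b), including its degenerate zero-residual version, and correctly handling the $0\cdot\infty$ convention at actions outside $S_m$. The rest is sign-tracking in $1-q_\alpha$.
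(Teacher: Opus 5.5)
Your proposal is correct and follows essentially the same route as the paper. Proposition~\ref{Amari proposition} identifies $S_m$ with $U(P)$ or $C(P)$, and Proposition~\ref{q-logit proposition}, combined with the sign of $1-q_\alpha$, gives the statewise supports and the cutoff relations for $\nu_\theta$.

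Two small remarks on the write-up. First, for $q_\alpha>1$ your ``elementary fact'' should be stated for finiteness rather than positivity: there $\exp_{q_\alpha}$ is always positive and equals $+\infty$ when $u\ge\nu_\theta$. The finiteness version is the one you actually use. Second, in the case $-1<\alpha<1$ you do not need to rule out branch 2(b) separately. The inclusion $S_\theta\subseteq\bigcup_\tau S_\tau=S_m$ is immediate, and positivity on $S_m$ holds in every branch, which is exactly how the paper argues.
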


By this corollary, conditional choice supports have distinct properties depending on the three ranges of $\alpha$.

\begin{itemize}
\item Suppose that $\alpha<-1$. The consideration set coincides with the support of the $\alpha$-integration. The conditional choice support at $\theta$ consists of actions in the consideration set whose payoffs exceed the cutoff $\nu_\theta$.
\item Suppose that $-1\leq \alpha<1$. Not only the consideration set but also each conditional choice support coincides with the support of the $\alpha$-integration. 

\item Suppose that $\alpha\geq 1$. The intersection of all conditional choice supports coincides with the support of the $\alpha$-integration. The conditional choice support at $\theta$ consists of actions in the $\alpha$-integration and outside actions that attain the highest payoff among all actions, provided that this highest payoff is sufficiently high (cf. Proposition \ref{q-logit proposition}).
\end{itemize}

We illustrate these three regimes using a numerical example. 
Keeping the prior, payoffs, and information-cost scale in Example~1 fixed,
we numerically calculate optimal rules for selected values of 
$\alpha$. The conditional choice supports and the corresponding optimal rules are reported in Tables \ref{tab:support} and \ref{tab:rules}, respectively. 

\begin{table}[t]
\centering
\begin{tabular}{c|c|c|c|c|c}
  $\alpha$ & $q_\alpha$ & $S_m$ & $S_1$ & $S_2$ & $S_3$ \\ \hline
  $-3$ & $0$   & $\{a,b,c\}$ & $\{a,b,c\}$ & $\{a,b,c\}$ & $\{c\}$     \\
  $-1$ & $1$   & $\{a,b,c\}$ & $\{a,b,c\}$ & $\{a,b,c\}$ & $\{a,b,c\}$ \\
  $0$  & $1.5$ & $\{a,b,c\}$ & $\{a,b,c\}$ & $\{a,b,c\}$ & $\{a,b,c\}$ \\
  $1$  & $2$   & $\{a,b,c\}$ & $\{a,b,c\}$ & $\{a,b,c\}$ & $\{a,b,c\}$ \\
  $3$  & $3$   & $\{c\}$     & $\{a,c\}$   & $\{b,c\}$   & $\{c\}$     \\
\end{tabular}
\caption{Support patterns for selected values of $\alpha$.}
\label{tab:support}
\end{table}

\begin{table}[t]
\centering
\begin{tabular}{c|c|c|c}
  $\alpha$ & $P_1$ & $P_2$ & $P_3$ \\ \hline
  $-3$ & $(0.620,\ 0.328,\ 0.052)$ & $(0.328,\ 0.620,\ 0.052)$ & $(0,\ 0,\ 1)$                 \\
  $-1$ & $(0.641,\ 0.235,\ 0.123)$ & $(0.236,\ 0.641,\ 0.123)$ & $(0.001,\ 0.000,\ 0.999)$     \\
  $0$  & $(0.639,\ 0.187,\ 0.174)$ & $(0.211,\ 0.612,\ 0.177)$ & $(0.016,\ 0.011,\ 0.973)$     \\
  $1$  & $(0.613,\ 0.125,\ 0.262)$ & $(0.166,\ 0.566,\ 0.268)$ & $(0.028,\ 0.020,\ 0.952)$     \\
  $3$  & $(0.5,\ 0,\ 0.5)$         & $(0,\ 0.5,\ 0.5)$         & $(0,\ 0,\ 1)$                 \\
\end{tabular}
\caption{Optimal rules for selected values of $\alpha$. Numerically computed entries are rounded to three decimal places; exact entries are displayed without trailing zeros. An entry displayed as $0$ is exactly zero, whereas an entry displayed as $0.000$ is positive before rounding.}
\label{tab:rules}
\end{table}

When $\alpha=-3<-1$, the support of the $\alpha$-integration is $S_m=\{a,b,c\}$.
Since
$
  S_1=S_2=\{a,b,c\}$ and $S_3=\{c\}$, 
the cutoff characterization implies
\[
\begin{array}{lll}
  u(a,1)>\nu_1, & u(b,1)>\nu_1, & u(c,1)>\nu_1, \\[2mm]
  u(a,2)>\nu_2, & u(b,2)>\nu_2, & u(c,2)>\nu_2, \\[2mm]
  u(a,3)\leq\nu_3, & u(b,3)\leq\nu_3, & u(c,3)>\nu_3.
\end{array}
\]
Thus, $a$ and $b$ are not selected in state~$3$ because their payoffs are below the relevant lower
cutoff, while all actions are selected in states~$1$ and $2$ because their payoffs exceed the cutoff in those states.

When $\alpha=-1$ and  $\alpha=0$, 
the support of the $\alpha$-integration is $S_m=\{a,b,c\}$, which coincides with each conditional choice support.

When $\alpha=1$, the support of the $\alpha$-integration is $S_m=\{a,b,c\}$. 
In this case, all actions are selected with positive probability in every state because the support of the $\alpha$-integration coincides with the intersection of all conditional choice supports.

When $\alpha=3$, as established in Example~1, the support of the $\alpha$-integration is $S_m=\{c\}$.
Thus, $c$ is selected in all states. 
In state~$1$, however, an action outside $S_m$ is additionally selected and $S_1=\{a,c\}$.
Since
\[
  u(a,1)=2>1=u(b,1)>0=u(c,1),
\]
action $a$ is the payoff-maximizing action in state~$1$.
Its payoff reaches the upper boundary required for selection outside $S_m$, and it
therefore receives the residual probability mass. Action $b$ is not selected, even though
it yields a higher payoff than $c$, because it is not a payoff maximizer.
In state~$2$, an action outside $S_m$ is additionally selected and $S_2=\{b,c\}$.
Since
\[
  u(b,2)=2>1=u(a,2)>0=u(c,2),
\]
action $b$ is the payoff-maximizing action in state~$2$.
Its payoff reaches the upper boundary required for selection outside $S_m$, and it
therefore receives the residual probability mass. Action $a$ is not selected, even though
it yields a higher payoff than $c$, because it is not a payoff maximizer.
In state~$3$, the activation condition for an action outside
$S_m$ is not satisfied, so $S_3=\{c\}$.

\section{Guess-the-state problem}\label{sec:guess-state}

This section applies Proposition \ref{main result prop} to symmetric guess-the-state problems, which serve as experimentally testable models of rational inattention, and conducts a comparative static analysis with respect to $\alpha$. 
We then assess the approximate magnitude of $\alpha$ consistent with the experimental results of \cite{deanNeligh2023}.

\subsection{Response functions}

Let
$
\Theta=A=\{1,\ldots,n\}$ with 
$n\geq 2$ and $\pi(\theta)=1/n\text{ for all }\theta\in\Theta$. 
The payoff to action $a$ at state $\theta$ is 
\[
u(a,\theta)=
\begin{cases}
w & \text{if } a=\theta,\\
0 & \text{if } a\neq\theta,
\end{cases}
\]
where $w>0$ is constant. 
Thus, the decision maker receives $w$ from guessing the state correctly.

Since the problem is symmetric across states and actions, we restrict attention to a symmetric stochastic choice rule:
\begin{equation}
P_\theta(a)=
\begin{cases}
\rho & \text{if } a=\theta,\\
\dfrac{1-\rho}{n-1} & \text{if } a\neq\theta,
\end{cases}
\notag\label{eq:guess-state-rule}
\end{equation}
where $\rho$ is the probability of choosing a correct action. 
The following proposition characterizes a symmetric optimal choice rule.\footnote{See Proposition 10 of BDP for the co-finite $f$-information case.}

\begin{proposition}\label{prop:guess-state}
There exists a symmetric optimal choice rule. 
The probability of a correct action, denoted by $\rho(w)$ as a function of $w$, is characterized as follows.

\begin{enumerate}
\item Suppose that $\alpha<-1$. Define
\[
\bar w_\alpha
\equiv
\frac{\kappa n^{1-q_\alpha}}{1-q_\alpha}.
\]
If $w<\bar w_\alpha$, then $\rho(w)\in(1/n,1)$ is the unique solution to
\begin{equation}
\left(n\rho(w)\right)^{1-q_\alpha}
-
\left(
\frac{n(1-\rho(w))}{n-1}
\right)^{1-q_\alpha}
=
(1-q_\alpha)\frac{w}{\kappa}.
\label{eq:guess-state-response}
\end{equation}
If $w\geq\bar w_\alpha$, then $\rho(w)=1$.

\item Suppose that $\alpha=-1$. Then,
\begin{equation}
\rho(w)
=
\frac{\exp(w/\kappa)}
{\exp(w/\kappa)+n-1}.\label{eq:guess-state-response-q1}
\end{equation}

\item Suppose that $\alpha> -1$. Then $\rho(w)\in(1/n,1)$ is the unique solution to \eqref{eq:guess-state-response}. 
\end{enumerate}
\end{proposition}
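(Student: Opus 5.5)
The plan is to construct a symmetric rule satisfying both conditions of Proposition~\ref{main result prop}, which is necessary and sufficient for optimality. Fix a symmetric rule with parameter $\rho\in[1/n,1]$. For every action $a$, the vector $P(a)=(P_\theta(a))_\theta$ has exactly one entry equal to $\rho$ and $n-1$ entries equal to $(1-\rho)/(n-1)$. Since $\pi$ is uniform, the $\alpha$-mean $M_\alpha[P(a)]$ is the same for all $a$. Proposition~\ref{Amari proposition} then gives $m_\alpha^P(a)=1/n$ for all $a$, so $S_m=A$. This holds when $\alpha<1$, using the nonnegative extension of $M_\alpha$ if $\rho=1$. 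It also holds when $\alpha\ge1$ provided $\rho<1$, which guarantees $C(P)=A\neq\emptyset$. For $\alpha\ge1$ we must have $\rho<1$, since otherwise $C(P)=\emptyset$ and $I_\alpha(P)=+\infty$.

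Because $S_m=A$, case~1 of Proposition~\ref{q-logit proposition} applies for every $\alpha$. Condition~1 of Proposition~\ref{main result prop} therefore becomes
\[
n\rho=\exp_{q_\alpha}\!\left(\frac{w-\lambda_\theta}{\kappa}\right),\qquad
\frac{n(1-\rho)}{n-1}=\exp_{q_\alpha}\!\left(\frac{-\lambda_\theta}{\kappa}\right),
\]
with the same constant $\lambda_\theta=\lambda$ in every state by symmetry. Condition~2 is then automatic. For each $b$, the sum $\sum_\theta\pi(\theta)h_\alpha[\exp_{q_\alpha}((u(b,\theta)-\lambda)/\kappa)]$ contains one term from the state $\theta=b$ and $n-1$ identical terms from the other states, so it is independent of $b$. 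Hence every $a\in A$ attains both the max and the min. Consequently, a symmetric rule is optimal if and only if these two equations hold for some $\lambda$, and the problem reduces to solving them.

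In the interior case, where both arguments of $\exp_q$ lie in the region where $\exp_q$ is positive, we apply $\log_{q_\alpha}$ to each equation and subtract to eliminate $\lambda$:
\[
\log_{q}(n\rho)-\log_{q}\!\left(\frac{n(1-\rho)}{n-1}\right)=\frac{w}{\kappa}.
\]
Multiplying by $1-q$ gives \eqref{eq:guess-state-response}. At $q=1$ the same subtraction gives $\log\frac{(n-1)\rho}{1-\rho}=w/\kappa$, which is \eqref{eq:guess-state-response-q1}. Conversely, any solution $\rho$ determines $\lambda$ from either equation, so existence and uniqueness reduce to the scalar equation. Let
\[
g(\rho)=(n\rho)^{1-q}-\left(\frac{n(1-\rho)}{n-1}\right)^{1-q}.
\]
Then $g(1/n)=0$, and $g/(1-q)$ is strictly increasing on $[1/n,1)$.
\begin{itemize}
\item If $q>1$ (that is, $\alpha>-1$), $g/(1-q)\to+\infty$ as $\rho\to1$, because the second term of $g$ blows up. Hence every $w>0$ has a unique root in $(1/n,1)$.
\item If $q<1$ (that is, $\alpha<-1$), $g/(1-q)$ rises from $0$ to $n^{1-q}/(1-q)$ as $\rho\to1$. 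Hence a root in $(1/n,1)$ exists, and is unique, exactly when $w<\bar w_\alpha$.
\end{itemize}

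The remaining step, which I expect to be the main obstacle, is the truncated boundary case $\alpha<-1$ with $w\ge\bar w_\alpha$. Here the candidate is $\rho=1$. From $n=\exp_q((w-\lambda)/\kappa)$ we solve
\[
\lambda=w-\kappa\,\frac{n^{1-q}-1}{1-q}.
\]
The wrong actions receive probability zero precisely when $1-(1-q)\lambda/\kappa\le0$, and substituting $\lambda$ shows this is equivalent to $w\ge\bar w_\alpha$. This is consistent with $\supp P_\theta\subseteq S_m$ in Proposition~\ref{q-logit proposition}. Conversely, when $w\ge\bar w_\alpha$ no interior root exists, so $\rho=1$ is the only symmetric optimal rule. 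One must check carefully that $m_\alpha^P$ is still uniform at $\rho=1$ when $\alpha<-1$. This holds because $h_\alpha$ extends continuously to zero entries, and that extension is what makes $S_m=A$ valid there.
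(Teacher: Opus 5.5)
Your proposal is correct and follows essentially the paper's route. Like the paper, you use the uniform $\alpha$-integration, apply case~1 of Proposition~\ref{q-logit proposition}, note that condition~2 of Proposition~\ref{main result prop} holds automatically by symmetry, and eliminate $\lambda$ via $\log_{q_\alpha}$. The one difference is that you obtain existence, uniqueness, $\rho(w)>1/n$, and the threshold $\bar w_\alpha$ from monotonicity and limits in $\rho$ of $\log_q(n\rho)-\log_q\bigl(n(1-\rho)/(n-1)\bigr)$, plus a direct check of $\rho=1$. The paper instead uses strict monotonicity of the normalization $G_w(\lambda)$ in $\lambda$. Your restriction to $\rho\ge 1/n$ is harmless, since that difference is negative for $\rho<1/n$ while $w/\kappa>0$.
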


The function $\rho:\mathbb{R}_+\to (0,1]$ in Proposition \ref{prop:guess-state} is referred to as the response function. When $\alpha < -1$, $\rho(w)$ equals one for sufficiently large rewards. When $\alpha \geq -1$, $\rho(w)$ approaches one as the reward tends to infinity. The following corollary characterizes the rate of convergence.\footnote{For functions $f,g$, we write $f(x) = O(g(x))$ as $x \to \infty$ if there exist $C > 0$ and $x_0$ such that $|f(x)| \le C|g(x)|$ for all $x \ge x_0$.}

\begin{corollary}\label{prop:guess-state-asymptotic}
Let $\alpha\geq -1$ and $n\geq 2$. The error probability
\(1-\rho(w)\) satisfies
\[
1-\rho(w)
=
\begin{cases}
O\left(\exp\left(-\dfrac{w}{\kappa}\right)\right)
& \text{ if }\alpha=-1,\\[0.8em]

O\left(\left(\dfrac{w}{\kappa}\right)^{-\frac{2}{1+\alpha}}\right)
& \text{ if }\alpha>-1.
\end{cases}
\]
\end{corollary}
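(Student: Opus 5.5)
The plan is to read the asymptotics directly off the response equations in Proposition~\ref{prop:guess-state}. We treat the two regimes separately.

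\emph{The case $\alpha=-1$.} Formula \eqref{eq:guess-state-response-q1} gives
\[
1-\rho(w)=\frac{n-1}{\exp(w/\kappa)+n-1}\le (n-1)\exp(-w/\kappa).
\]
This is the claimed bound with constant $C=n-1$, valid for all $w\ge 0$.

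\emph{The case $\alpha>-1$.} Here $q_\alpha>1$, so set $r\equiv q_\alpha-1=(1+\alpha)/2>0$, which gives $1-q_\alpha=-r$. Multiplying \eqref{eq:guess-state-response} by $-1$ turns it into
\[
\left(\frac{n(1-\rho)}{n-1}\right)^{-r}-(n\rho)^{-r}=r\frac{w}{\kappa}.
\]
Since $\rho\in(1/n,1)$, we have $n\rho>1$, so $(n\rho)^{-r}\in(0,1)$. It follows that
\[
\left(\frac{n(1-\rho)}{n-1}\right)^{-r}= r\frac{w}{\kappa}+(n\rho)^{-r}\ \ge\ r\frac{w}{\kappa}.
\]
Both sides are positive, so raising them to the power $-1/r<0$ reverses the inequality:
\[
\frac{n(1-\rho)}{n-1}\le \left(r\frac{w}{\kappa}\right)^{-1/r}.
\]
Rearranging,
\[
1-\rho(w)\le \frac{n-1}{n}\,r^{-1/r}\left(\frac{w}{\kappa}\right)^{-1/r},
\]
and $1/r=2/(1+\alpha)$. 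This is the stated $O\bigl((w/\kappa)^{-2/(1+\alpha)}\bigr)$ bound, with explicit constant $C=\frac{n-1}{n}\,r^{-1/r}$, valid for all $w>0$.

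The only delicate point is applying the power with negative exponent correctly. This needs positivity of both sides, which is immediate from $\rho<1$ and $w>0$. The argument also relies on uniqueness and interiority of $\rho(w)$, which Proposition~\ref{prop:guess-state} already supplies, so no further obstacle is expected.
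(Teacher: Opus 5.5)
Your proposal is correct and follows the paper's proof essentially line by line. For $\alpha=-1$ you use the same bound $(n-1)\exp(-w/\kappa)$. For $\alpha>-1$ you drop the positive subtracted term $(n\rho)^{1-q_\alpha}$ from the response equation and invert the negative power, arriving at the paper's explicit constant $\frac{n-1}{n}(q_\alpha-1)^{-1/(q_\alpha-1)}$.
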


When \(\alpha=-1\), the error probability converges to zero at an exponential rate as $w$ approaches infinity.  When \(\alpha>-1\), the error probability decreases only at a polynomial rate.  Moreover, the polynomial exponent
${2}/({1+\alpha})$ 
is decreasing in \(\alpha\); hence larger values of \(\alpha\) imply slower convergence of the error probability to zero, as illustrated in  
Figure~\ref{fig:alpha-response-functions} which plots $\rho(w)$ when $n=2$ for $\alpha\in\{3,1,-1,-2,-3\}$. 

\begin{figure}
\centering
\resizebox{0.7\textwidth}{!}{%
  \includegraphics{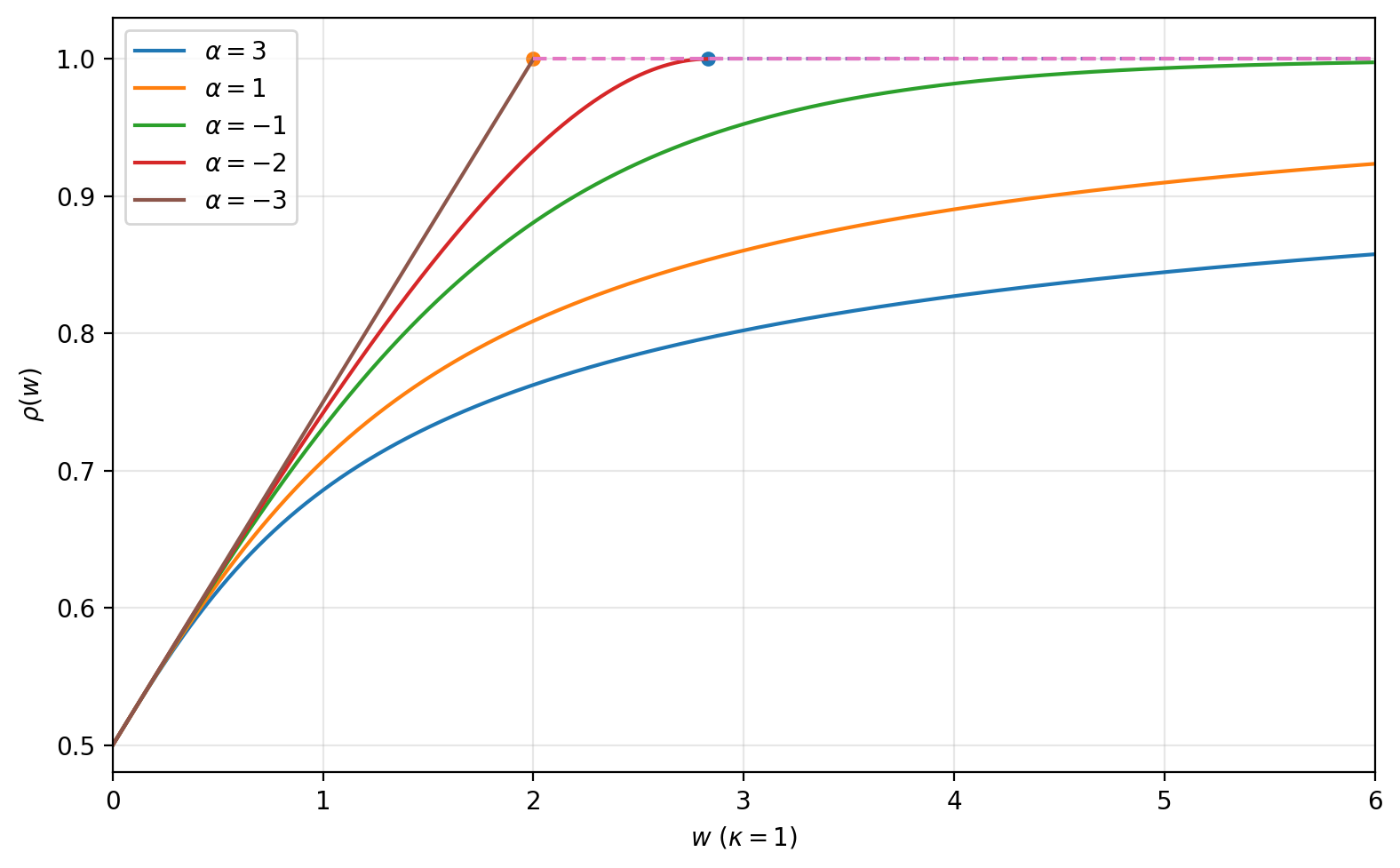}%
}
\caption{Response functions in the symmetric two-state two-action guess-the-state problem. The figure plots $\rho(w)$ against the reward $w$ for $\alpha\in\{3,1,-1,-2,-3\}$ with $\kappa=1$. The dashed horizontal segments indicate the ceiling $\rho(w)=1$  for $\alpha<-1$.}
\label{fig:alpha-response-functions}
\end{figure}

\subsection{Comparative statics}

To further study the behavioral implications of $\alpha$, we focus on the binary case $n=2$ with $\alpha>-1$, and consider a symmetric choice rule $P$ with the probability of a correct choice given by $\rho>1/2$: 
\[
P_1=(\rho,1-\rho),
\qquad
P_2=(1-\rho,\rho).
\]
We write the $\alpha$-information cost of $P$ as a function of $\rho$ and denote it by $c_\alpha(\rho)$. Since the $\alpha$-integration of a symmetric choice rule is the uniform distribution over $A$, we have
\[
c_\alpha(\rho)\equiv 
\frac{4\kappa}{1-\alpha^2}
\left(
1
-
2^{-(1+\alpha)/2}
\left(
\rho^{(1-\alpha)/2}
+
(1-\rho)^{(1-\alpha)/2}
\right)
\right).
\]
Define the elasticity of marginal cost with respect to the probability of a correct choice by
\[
\calE_{\alpha}(\rho)
\equiv 
\frac{d\log c_\alpha'(\rho)}{d\log \rho}
=
\frac{\rho c_\alpha''(\rho)}{c_\alpha'(\rho)}
=
\frac{1+\alpha}{2}
\frac{
\left(\rho/(1-\rho)\right)^{(3+\alpha)/2}
+
1
}{
\left(\rho/(1-\rho)\right)^{(1+\alpha)/2}
-
1
}>0.
\]
The next lemma shows that this elasticity strictly increases with $\alpha$.\footnote{$\calE_{\alpha}(\rho)$ is not necessarily monotone in $\alpha$ when $\alpha<-1$.}

\begin{lemma}\label{lem:elasticity-marginal-cost-alpha}
For each $\rho>1/2$, $\calE_{\alpha}(\rho)$ is strictly increasing in $\alpha$ on $(-1,\infty)$.
\end{lemma}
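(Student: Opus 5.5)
The plan is to reduce the claim to a one-variable inequality by reparametrizing. Fix $\rho>1/2$ and set $r\equiv\rho/(1-\rho)>1$, $L\equiv\log r>0$, and $s\equiv(1+\alpha)/2$. Then $\alpha\in(-1,\infty)$ corresponds to $s\in(0,\infty)$, and $s$ is strictly increasing in $\alpha$. In this notation the closed form stated before the lemma reads
\[
\calE_\alpha(\rho)=s\,\frac{r^{s+1}+1}{r^{s}-1},
\]
which is strictly positive. It therefore suffices to show that $s\mapsto \log\calE_\alpha(\rho)$ has a strictly positive derivative on $(0,\infty)$.

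Differentiating in $s$ gives
\[
\frac{d}{ds}\log\calE_\alpha(\rho)=\frac1s+L\left(\frac{r^{s+1}}{r^{s+1}+1}-\frac{r^s}{r^s-1}\right).
\]
Write $x\equiv r^s>1$. The bracket simplifies to
\[
\frac{rx}{rx+1}-\frac{x}{x-1}=-\frac{x(r+1)}{(rx+1)(x-1)}.
\]
Since $sL=\log x$, multiplying through by $s(rx+1)(x-1)>0$ shows that positivity of the derivative is equivalent to
\[
f(x)\equiv(rx+1)(x-1)-(r+1)\,x\log x>0\qquad\text{for all }x>1,\ r>1.
\]

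To prove this inequality I will use a second-order argument at $x=1$. First, $f(1)=0$. Next,
\[
f'(x)=2rx-2r-(r+1)\log x,
\]
so $f'(1)=0$. Finally,
\[
f''(x)=2r-\frac{r+1}{x}.
\]
For $x\ge1$ we have $(r+1)/x\le r+1<2r$ because $r>1$, so $f''(x)>0$ on $[1,\infty)$. Hence $f'$ is strictly increasing there and $f'(x)>0$ for $x>1$, and then $f$ is strictly increasing and $f(x)>0$ for $x>1$.

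This gives $\frac{d}{ds}\log\calE_\alpha(\rho)>0$ on $(0,\infty)$, so $\calE_\alpha(\rho)$ is strictly increasing in $s$, and hence in $\alpha$ on $(-1,\infty)$. The only step needing care is the algebraic simplification of the bracket, which turns the problem into the elementary inequality for $f$. The condition $r>1$, i.e.\ $\rho>1/2$, is used exactly to obtain $f''>0$.
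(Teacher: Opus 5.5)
Your proof is correct and follows the paper's route: the same substitution $r=\rho/(1-\rho)$, $s=(1+\alpha)/2$, $x=r^{s}$ reduces the claim to positivity of $(rx+1)(x-1)-(r+1)x\log x$ for $x>1$, $r>1$. The differences are minor. You differentiate $\log\calE_\alpha(\rho)$ rather than $\calE_\alpha(\rho)$ itself, which produces the same numerator. You then prove the inequality via $f(1)=f'(1)=0$ and $f''>0$, whereas the paper lower-bounds it by its $r=1$ case $x^{2}-1-2x\log x>0$.
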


This lemma implies that a larger $\alpha$ makes it harder to raise the probability of a correct choice by increasing the reward $w$.  
To see this formally, consider the inverse response function $w_\alpha(\rho)$, defined as the reward required to induce the correct-choice probability $\rho$.
Since 
$
w_\alpha(\rho)=c_\alpha'(\rho)$ 
by the first-order condition for an interior optimum, the elasticity of the required reward with respect to the probability of a correct choice equals the elasticity of marginal cost:
\[
\frac{d\log w_\alpha(\rho)}{d\log \rho}
=
\frac{\rho w_\alpha'(\rho)}{w_\alpha(\rho)}
=
\frac{\rho c_\alpha''(\rho)}{c_\alpha'(\rho)}
=
\mathcal E_\alpha(\rho).
\]
Lemma~\ref{lem:elasticity-marginal-cost-alpha} shows that this elasticity is strictly increasing in $\alpha$ for every $\rho>1/2$.
Therefore, at any fixed achieved accuracy $\rho$, a larger $\alpha$ implies that 
a given proportional increase in the probability of a correct choice requires a larger proportional increase in the reward. 
This results in a slow convergence rate of the response function to one as the reward approaches infinity for a larger $\alpha$.

\subsection{Estimated magnitude of \texorpdfstring{$\alpha$}{alpha}}\label{Estimated magnitude sec}

Lemma~\ref{lem:elasticity-marginal-cost-alpha} and the experimental results of rational inattention by \citet{deanNeligh2023} suggest that $\alpha$ is greater than $-1$.
Experiment~1.2 in \citet{deanNeligh2023} considers the same guess-the-state problem as in the previous subsection, in which subjects are less responsive to incentives than the mutual information benchmark predicts.
To conduct statistical analysis, \citet{deanNeligh2023} adopt several alternatives to the mutual-information model, and one specification uses a posterior-separable cost of power form~\eqref{power cost}, which is a power function of posteriors.
This differs from the $\alpha$-information cost, which is a function of choice probabilities.
However, when restricted to symmetric choice rules in the symmetric binary environment, the two costs take the same functional form through reparameterization and normalization, since the choice probability $P_\theta(a)$ equals the posterior $\gamma_a(\theta) = P_\theta(a)\pi(\theta)/\sum_{\theta'} P_{\theta'}(a)\pi(\theta')$ for each $a, \theta \in \Theta = \{1, 2\}$.

\citet{deanNeligh2023} write the exponent on a posterior probability $\gamma_a(\theta)$ as $2-r$.
Their estimate of $r$ from the aggregate choice frequencies in Experiment~1.2 is
\[
\widehat r^{\mathrm{DN}}=13.41.
\]
In the $\alpha$-information cost, the exponent on a choice probability $P_\theta(a)$ is $(1-\alpha)/2$.
Under the uniform prior and a symmetric choice rule, the revealed posterior satisfies $\gamma_a(\theta)=P_\theta(a)$.
Equating the two exponents gives
$
2-r
=
({1-\alpha})/{2}$, 
and hence
\[
\widehat \alpha^{\mathrm{DN}}
=
2\widehat r^{\mathrm{DN}}-3
=
23.82.
\]
This estimate is based on the assumption that all subjects share a common cost function.\footnote{Although \citet{deanNeligh2023} also conduct statistical analyses allowing heterogeneous cost functions, their reported estimate of $r$ is based on the common-cost specification.}

From a different perspective, 
we assess the approximate magnitude of $\alpha$ using the individual-level observations from Experiment~1.2.
A reward for a correct choice is $w\in\{5,40,70,95\}$, measured in probability points.
Let $\rho(w|\alpha,\kappa)$ denote the predicted correct-choice probability at reward $w$, which is the unique solution to \eqref{eq:guess-state-response}. We allow the cost scale $\kappa$ to differ across subjects while keeping $\alpha$ common.
For subject $i$, let $n_{iw}$ and $y_{iw}$ be the number of trials and that of correct choices, respectively, at reward $w$.
For each $\alpha$, we estimate subject $i$'s cost scale $\kappa_i$ by maximizing
\[
\ell_i(\alpha,\kappa_i)
\equiv
\sum_{w\in\{5,40,70,95\}}
\left[
y_{iw}\log \rho(w|\alpha,\kappa_i)
+
(n_{iw}-y_{iw})\log\{1-\rho(w|\alpha,\kappa_i)\}
\right].
\]
Then we estimate $\alpha$ by maximizing the profile log-likelihood 
\[
\ell_p(\alpha)
\equiv
\sum_i
\max_{\kappa_i\geq 0}
\ell_i(\alpha,\kappa_i).
\]
The profile maximum likelihood estimate is\footnote{See Appendix \ref{app:individual-profile-likelihood}.}
\[
\widehat{\alpha}=3.18.
\]

For comparison, we also estimate $\alpha$ under the common-cost specification, in which both $\alpha$ and $\kappa$ are common across subjects and trials.
The maximum likelihood estimate is
\[
\widehat{\alpha}^{\mathrm{cc}}
\simeq
23.95,
\qquad
\widehat{\kappa}^{\mathrm{cc}}
\simeq
0.382.
\]
This value is close to the translated estimate $\widehat\alpha^{\mathrm{DN}}=23.82$, as expected. The estimate under the common-cost specification is much larger because 
flatness in the reward-accuracy relation can arise for reasons other than 
a large value of $\alpha$. For a fixed value of $\alpha$, a subject with a small $\kappa_i$ has a high value of $\rho_i(5)$ (the accuracy at the lowest reward level), so an increase in the reward has little room to raise accuracy. In contrast, a subject with a large $\kappa_i$ has a low value of $\rho_i(95)$ (the accuracy at the highest reward level), so a decrease in the reward has little room to reduce accuracy.
 Thus, with heterogeneous 
$\kappa_i$, flat reward-accuracy relations can arise either because 
accuracy is already high at low rewards or because accuracy remains low 
even at high rewards. Under the common-cost specification, however, a single 
$\kappa$ cannot capture these different sources of flatness. The remaining flatness 
is therefore attributed to $\alpha$, pushing 
$\widehat{\alpha}^{\mathrm{cc}}$ upward.\footnote{This is not a bias in the usual sense. The common-cost estimator targets the curvature of the aggregate response curve, which need not coincide with the common individual-level value of $\alpha$ when cost scales differ across subjects.} 
 Once subject-level heterogeneity 
in $\kappa_i$ is allowed, these sources of flatness are absorbed by the 
subject-specific cost scales, and the estimate of $\alpha$ falls to 
around $3$.





\section{Discussion}\label{BDP}
\subsection{Comparison with BDP}\label{BDPsub}

We relate Proposition~\ref{main result prop} to the optimality condition of BDP.  For a choice rule \(P=(P_\theta)_{\theta\in\Theta}\)
and a reference distribution \(m\in\Delta(A)\), the \(f\)-divergence is
\[
  D_f(P:m)
  =
  \sum_{a\in A}m(a)
  f\left(
    \left(
      \frac{P_\theta(a)}{m(a)}
    \right)_{\theta\in\Theta}
  \right),
\]
where $f:\R_+^\Theta\to \R$,
and the
\(f\)-information is
\[
  I_f(P)
  =
  \inf_{m\in\Delta(A)}D_f(P:m).
\]
An \(f\)-mean of \(P\) is defined as $m\in \Delta(A)$ attaining this infimum.
\citet{csiszar1972} consider the separable case, 
\[
  f(x)
  =
  \sum_{\theta\in\Theta}\pi(\theta)\varphi(x_\theta),
\]
where \(\varphi\) is a univariate convex function.  
BDP refer to the $f$-information in the separable case as the Csisz\'ar information.
BDP impose co-finiteness as part of their regularity assumption on \(f\). 
For the separable case, co-finiteness requires that
\(\lim_{t\to\infty}\varphi(t)/t=+\infty\).

BDP study the information acquisition problem \eqref{eq:BDP-alpha} replacing $\kappa I_\alpha(P)$ with $I_f(P)$. 
In the separable case, the optimality condition is given in terms of 
the convex conjugate of \(\varphi\) denoted by \(\psi\equiv \varphi^\ast\).  
BDP show that $P$ is optimal if and only if there exist \(m\in\Delta(A)\) and $\zeta\in\R^\Theta$ such that 
\begin{gather}
  P_\theta(a)
  =
  m(a)\psi'\left(
    u(a,\theta)-\zeta_\theta
  \right),
  \qquad
  \sum_{a\in A}
  m(a)\psi'\left(
    u(a,\theta)-\zeta_\theta
  \right)
  =
  1, \label{BDP opt1}\\
  \sum_{\theta\in\Theta}
  \pi(\theta)\psi\left(
u(a,\theta)-\zeta_\theta
  \right)
  =
  \max_{b\in A}
  \sum_{\theta\in\Theta}
  \pi(\theta)\psi\left(
u(b,\theta)-\zeta_\theta
  \right)\text{ for all }a\in\supp m.\label{BDP opt2}
\end{gather}
Here \(m\) is the \(f\)-mean of \(P\), and \(\zeta_\theta\) is the
statewise normalizing constant.

The \(\alpha\)-information corresponds to the Csisz\'ar-information with 
$
  \varphi(t)\equiv   \varphi_\alpha(t)$, 
where 
\[
\varphi_\alpha(t)
  =
  \frac{4}{1-\alpha^2}
  \left(
    1-t^{(1-\alpha)/2}
    +\frac{1-\alpha}{2}(t-1)
  \right)
  \]
for \(\alpha\neq\pm1\), with limiting cases
\(\varphi_{-1}(t)=t\log t-t+1\) and
\(\varphi_1(t)=-\log t+t-1\).  With this choice, 
the $\alpha$-divergence is the \(f\)-divergence, and 
the $\alpha$-information is the Csisz\'ar information, while co-finiteness fails when $\alpha>-1$: 
\[
  \lim_{t\to\infty}\frac{\varphi_\alpha(t)}{t}
  =
  +\infty
  \quad\text{if } \alpha\leq -1,
  \qquad
  \lim_{t\to\infty}\frac{\varphi_\alpha(t)}{t}
  =
  \frac{2}{1+\alpha}
  \quad\text{if } \alpha>-1 .
\]

Thus, when \(\alpha\le -1\), the co-finiteness condition holds and BDP's 
optimality condition applies directly.  In this case,
\eqref{BDP opt1} and \eqref{BDP opt2} correspond to Parts 1 and 2 of
Proposition~\ref{main result prop}, respectively.  
Proposition~\ref{Amari proposition} uses the $\alpha$-integration to
identify the \(f\)-mean \(m\) explicitly.  This explicit expression for \(m\)
makes the optimality condition tractable in terms of the \(q\)-exponential formula.

When $\alpha>-1$, the co-finiteness condition fails, so BDP's
optimality condition does not directly apply. Nevertheless, an analogous
optimality condition holds, as established in Proposition~\ref{main result prop}.
Moreover, Proposition~\ref{main result prop} and Corollary~\ref{supp coro}
reveal that conditional choice supports exhibit different properties when
co-finiteness fails, as is also illustrated by the numerical example. Thus, our
analysis not only specializes BDP's optimality condition, but also demonstrates the
support behavior of optimal rules when co-finiteness fails.

\subsection{Posterior-separable power-form cost}\label{UPS}

We show that the optimal rule under the posterior-separable power-form cost \eqref{power cost} can also be represented in terms of the $q$-exponential, but that its behavioral implication differs from the optimal rule under the $\alpha$-information cost. 

A posterior-separable cost belongs to the general class of $f$-information costs, as shown by BDP. 
In particular, 
BDP's optimality condition implies the following representation of an optimal rule under \eqref{UPScost}: 
for each $a\in A$ with $P_\pi(a)= \sum_\theta P_\theta(a)\pi(\theta)>0$ and each $\theta\in \supp\pi$,
\[
P_\theta(a)
=
\frac{P_\pi(a)\nabla_\theta H^\star(x)}{\pi(\theta)},\quad 
x=(\kappa^{-1}(u(a,\theta')-\mu(\theta')/\pi(\theta')))_{\theta'\in \Theta}\in \R^\Theta,
\]
 where $\mu=(\mu(\theta))_{\theta\in \Theta}\in \R^\Theta$ is the multiplier in BDP's formulation, $H^\star$ is the convex conjugate of $H$, and $\nabla_\theta H^\star$ is the gradient of $H^\star$.  
When $H=H_s$, this is written as 
\begin{align}
P_\theta(a)
=
\frac{P_\pi(a)}{\pi(\theta)}
\exp_q
\left(
\frac{
u(a,\theta)-\eta_\theta-\zeta_a
}{\kappa}
\right),\quad \eta_\theta=\mu(\theta)/\pi(\theta), \label{UPS optimal}
\end{align}
where $q=2-s$ and $\zeta_a$ is the (state-independent) normalization constant for the posterior given by 
\[
\gamma_a^P(\theta)= 
\exp_q
\left(
\frac{
u(a,\theta)-\eta_\theta-\zeta_a
}{\kappa}
\right).
\]

The optimal rule \eqref{UPS optimal} and the optimal rule in Proposition
\ref{main result prop} are represented in terms of the same
$q$-exponential function when $q=2-s=q_\alpha$. 
However, there are three differences. First, \eqref{UPS optimal} contains
the action-specific constant $\zeta_a$, which shifts the argument of the
$q$-exponential function uniformly across states for each action. Such an
action-specific constant is absent in the optimal rule under the
$\alpha$-information cost, which takes the modified
$q$-logit form.  
Second, the two rules differ in their multiplicative factors outside the
$q$-exponential function. In \eqref{UPS optimal}, this factor is
$P_\pi(a)/\pi(\theta)$; in the optimal rule under the
$\alpha$-information cost, the corresponding factor is the
$\alpha$-integration $m_\alpha^P(a)$. 
Third, when $s<0$ and $q>2$, the conditional choice support of
\eqref{UPS optimal} coincides with the consideration set in each state,
which is not necessarily the case in the $\alpha$-information model.
The next lemma formally states this support property under the
posterior-separable power-form cost.

\begin{lemma}
Let $P$ be an optimal rule under the posterior-separable power-form cost \eqref{power cost}. 
Suppose $s<0$. If $P_\pi(a)>0$, then
$
P_\theta(a)>0$ for every $\theta\in\supp\pi$.
\end{lemma}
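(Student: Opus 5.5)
The idea is that for $s<0$ the entropy $H_s$ has infinite slope at the boundary of the simplex, so an optimal posterior can never put zero weight on any state.

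First, reduce to posteriors. For $s<0$ we have $H_s(\gamma)=\bigl(\sum_\theta\gamma(\theta)^s-1\bigr)/(s(s-1))$ with $s(s-1)>0$ and $\gamma(\theta)^s\to+\infty$ as $\gamma(\theta)\to0$. So $H_s(\gamma)=+\infty$ whenever some $\gamma(\theta)=0$, with the convention $0^s=+\infty$. Now suppose $P$ is optimal. The rule that always chooses a fixed action has zero cost by \eqref{UPScost}, because its posterior is $\pi$. Hence the optimal value is finite, and the cost $C_s(P)$ must be finite. Take any $a$ with $P_\pi(a)>0$. Its term $P_\pi(a)H_s(\gamma_a^P)$ must then be finite, so $\gamma_a^P(\theta)>0$ for all $\theta$. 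Since $\gamma_a^P(\theta)=\pi(\theta)P_\theta(a)/P_\pi(a)$ and $\pi$ has full support, this gives $P_\theta(a)>0$ for every $\theta\in\supp\pi$.

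The main point to check is exactly this finiteness argument. We need the convention $0^s=+\infty$, which is the natural lower semicontinuous extension of the convex function $H_s$ to the boundary of the simplex. Under that extension, any rule producing a boundary posterior with positive marginal weight has infinite cost, and so cannot be optimal because the uninformative rule does strictly better.

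If the cost were instead restricted to interior posteriors, or extended differently, I would use the representation \eqref{UPS optimal} as a fallback. With $q=2-s>2$, we have $\exp_q(x)=[1+(1-q)x]_+^{1/(1-q)}$ and $1-q<0$. On the domain where $1+(1-q)x>0$ this is strictly positive and tends to $+\infty$ at the boundary of that domain. It never equals zero there, since the zero value only arises through the $[\cdot]_+$ truncation, which would give $0^{1/(1-q)}=+\infty$. For $\gamma_a^P$ to be a probability vector, each argument $(u(a,\theta)-\eta_\theta-\zeta_a)/\kappa$ must therefore lie in the positive region, and each $\gamma_a^P(\theta)$ is then positive. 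This yields the same conclusion through the first-order characterization.
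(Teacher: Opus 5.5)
Your main argument is correct and is essentially the paper's proof. For $s<0$, finiteness of $C_s(P)$ forces every posterior $\gamma_a^P$ with $P_\pi(a)>0$ to be interior, and $P_\theta(a)=P_\pi(a)\gamma_a^P(\theta)/\pi(\theta)$ then gives positivity. You also make explicit why the optimal cost must be finite (the uninformative rule has zero cost), which the paper leaves implicit, and the fallback via \eqref{UPS optimal} is not needed.
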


\begin{proof}
The cost of $\gamma_a^P$ must be finite, which requires that, for all $a\in \supp P_\pi$,  
$\gamma_a^P(\theta)>0$ for all $\theta\in\supp \pi$. 
Thus, 
$
P_\theta(a)
=
P_\pi(a)\gamma_a^P(\theta)/{\pi(\theta)}>0$ when $P_\pi(a)>0$ and $\pi(\theta)>0$. 
\end{proof}

This lemma implies that the posterior-separable power-form model cannot reproduce the 
optimal choice rules of the $\alpha$-divergence model for which the consideration set and
the conditional choice supports differ when $q_\alpha=2-s>2$.

\begin{appendix}

\section{Proofs}\label{secproofs}

\subsection{Proof of Proposition \ref{q-logit proposition}}

The objective function is 
\begin{align*}
\Phi_{\theta}(p,m)
&\equiv 
  \sum_{a\in A}p(a)u(a,\theta)
  -
  \kappa D_\alpha[p:m]\\
&  =
  \sum_{a\in A}p(a)u(a,\theta)
  -
  \frac{4\kappa}{1-\alpha^2}
  \left(
  1-
  \sum_{a\in A}
  p(a)^{(1-\alpha)/2}m(a)^{(1+\alpha)/2}
  \right),
\end{align*}
which is concave in $p$.  Hence the KKT
conditions are necessary and sufficient. 
The Lagrangian is
\[
  \mathcal L(p,\nu_\theta,\xi)
  =
\Phi_\theta(p,m)
  -
  \nu_\theta\left(\sum_{a\in A}p(a)-1\right)
  +
  \sum_{a\in A}\xi_a p(a).
\]
At differentiable points, the KKT conditions are
\[
  \sum_{a\in A}p(a)=1,
  \qquad
  p(a)\ge0,
\]
\[
  \xi_a\ge0,
  \qquad
  \xi_a p(a)=0,\qquad 
  \frac{\partial \Phi_\theta(p,m)}{\partial p(a)}
  -
  \nu_\theta+\xi_a=0\text{ for all }a\in A.
\]
At boundary points, this condition is replaced by the corresponding normal-cone condition.

For $m(a)>0$ and $p(a)>0$, direct differentiation gives, for
$\alpha\neq -1$,
\[
  \frac{\partial \Phi_\theta(p,m)}{\partial p(a)}
  =
  u(a,\theta)
  +
  \frac{2\kappa}{1+\alpha}
  \left(
  \frac{m(a)}{p(a)}
  \right)^{(1+\alpha)/2},
\]
where the case $\alpha=1$ is interpreted as the limiting reverse-KL case.
Thus, on every active coordinate with $m(a)>0$ and $p(a)>0$,
complementary slackness gives $\xi_a=0$, and hence
\begin{equation}
  u(a,\theta)
  +
  \frac{2\kappa}{1+\alpha}
  \left(
  \frac{m(a)}{p(a)}
  \right)^{(1+\alpha)/2}
  =
  \nu_\theta.	
\notag 
\end{equation}
For $\alpha\neq -1$, put
\[
  \lambda_\theta
  =
  \nu_\theta-\frac{2\kappa}{1+\alpha}.
\]
Then, the active-coordinate KKT condition is equivalent to
\begin{equation}
  p(a)
  =  m(a)
  \left[
  \frac{(1+\alpha)(\nu_\theta-u(a,\theta))}{2\kappa}
  \right]^{-2/(1+\alpha)}
=
  m(a)
  \exp_{q_\alpha}\left(
  \frac{u(a,\theta)-\lambda_\theta}{\kappa}
  \right).
\label{activeKKT}
\end{equation}

For $\alpha=-1$, the first-order condition is
\[
  u(a,\theta)
  -
  \kappa\left(\log\frac{p(a)}{m(a)}+1\right)
  =
  \nu_\theta,
\]
and hence
\[
  p(a)
  =
  m(a)
  \exp\left(
  \frac{u(a,\theta)-\nu_\theta}{\kappa}-1
  \right).
\]
This is the same $q_\alpha$-exponential formula with $q_\alpha=1$, after
absorbing the constant term into the multiplier.

Suppose that 
$\alpha< -1$. Then $(1+\alpha)/2<0$, so
finite cost requires that $
  p(a)=0$ whenever $m(a)=0$. 
Thus \eqref{activeKKT} applies on $A$, with zero probability assigned to
actions outside $S_m$, and requires that $u(a,\theta)>\nu_\theta$ for all $a\in S_m$ with $p(a)>0$. 
On the other hand, for $a\in S_m$ with $p(a)=0$, the KKT conditions imply 
\[
\xi_a\geq 0\quad\text{ and } \quad \frac{\partial \Phi_\theta(p,m)}{\partial p(a)}
  -
  \nu_\theta+\xi_a=
    u(a,\theta)    -
  \nu_\theta+\xi_a=  0
\]
and thus
$
  u(a,\theta)\leq \nu_\theta$.  
  Therefore, $a\in S_m$ satisfies $p(a)>0$ if and only if $
  u(a,\theta)> \nu_\theta$.
Therefore, \eqref{q ex family} holds, where $\lambda_\theta$ is determined by \eqref{normalization1}, i.e., $\lambda_\theta=\bar \lambda_{\theta m}$ and $\nu_\theta =\bar \nu_{\theta m}\equiv \bar \lambda_{\theta m}+2\kappa/(1+\alpha)$, and the solution is unique in this case.

Suppose that $\alpha>-1$ and $S_m=A$, where $(1+\alpha)/2>0$.  
For $a\in S_m$, ${\partial \Phi_\theta(p,m)}/{\partial p(a)}$ tends to $+\infty$ as
$p(a)\downarrow0$.  Hence $p(a)=0$ cannot be optimal, and every finite optimum satisfies $p(a)>0$ for all
$a\in A$.  Hence  \eqref{activeKKT} applies on $A$, and $\lambda_\theta$ is determined by \eqref{normalization1}, and the solution is unique also in this case.

Suppose that $\alpha>-1$ and $S_m\subsetneq A$.  
For each $\theta\in\Theta$, define
\begin{align}
  T_\theta(\nu)
  \equiv 
  \sum_{a\in S_m}
  m(a)
  \left[
  \frac{(1+\alpha)(\nu-u(a,\theta))}{2\kappa}
  \right]^{-2/(1+\alpha)}
=    \sum_{a\in S_m}
m(a)\exp_{q_\alpha}\left(
  \frac{u(a,\theta)-\lambda}{\kappa}
  \right),\notag 
\end{align}
where $\lambda=\nu-2\kappa/(1+\alpha)$. 
Note that $T_\theta(\nu)$ is strictly decreasing and $T_\theta(\bar\nu_{\theta m})=1$.

Since
$\alpha>-1$, every action in $S_m$ must receive positive probability at any
finite optimum by the above argument. Hence  \eqref{activeKKT} applies on $S_m$. 
In addition, the probability assigned to $S_m$ is equal to $T_\theta(\nu_\theta)$. 
Thus, we must have $T_\theta(\nu_\theta)\leq 1$, i.e.,  $\nu_\theta \geq \bar \nu_{\theta m}$. 
Note that $p(a)>0$ for some $a\not\in S_m$ if and only if  $T_\theta(\nu_\theta)< 1$, i.e.,  $\nu_\theta > \bar \nu_{\theta m}$ and $\lambda_\theta > \bar \lambda_{\theta m}$. 

Now consider actions outside $S_m$.  If $a\notin S_m$, then $m(a)=0$, so the divergence term including $a$ has no marginal contribution since $\alpha>-1$.  Hence the KKT condition becomes
$
  u(a,\theta)-\nu_\theta+\xi_a=0$, and complementary slackness implies
\[
  p(a)>0
\ \ \Rightarrow\ \ 
  u(a,\theta)=\nu_\theta \quad \text{ and }\quad 
  p(a)=0 \ \ 
\Rightarrow \ \ 
  u(a,\theta)\le\nu_\theta.
\]
Therefore, if $p(a)>0$ for some $a\not\in S_m$, we must have $u(a,\theta)=\max_{b\not\in S_m}u(b,\theta)=\nu_\theta > \bar \nu_{\theta m}$ and $\lambda_\theta=\bar\lambda_{\theta m}^0>\bar \lambda_{\theta m}$. 
Conversely, if this is true, such $p$ is optimal. 
This also implies that 
if $\bar\lambda_{\theta m}^0\leq \bar \lambda_{\theta m}$, then $p(a)=0$ for all $a\not\in S_m$, and thus $\nu_\theta=\bar\nu_{\theta m}$ and $\lambda_\theta=\bar\lambda_{\theta m}$ since $T_\theta(\nu_\theta)=1$. 
This proves the stated characterization.  The solution is unique if the set of
zero-$m$ maximizers is a singleton, and otherwise it is not unique.

\subsection{Proof of Proposition \ref{main result prop}}

Fix $P\in\Delta(A)^\Theta$. By Proposition~\ref{Amari proposition},
\[
  V(P)
  \equiv
  \max_{m\in\Delta(A)}\Phi(P,m)
  =
  \Phi(P,m_\alpha^P).
\]
Define
\[
  W(m)
\equiv 
  \max_{Q\in\Delta(A)^\Theta}\Phi(Q,m).
\]
Then
\begin{equation}
  \Phi(P,m_\alpha^P)
  \le
  W(m_\alpha^P)
  \le
  \max_{m\in\Delta(A)}W(m).
\label{auxineq}
\end{equation}
Moreover,
\[
\begin{aligned}
\max_{P\in\Delta(A)^\Theta}V(P)
&=
\max_{P\in\Delta(A)^\Theta}
\max_{m\in\Delta(A)}
\Phi(P,m)                                      \\
&=
\max_{m\in\Delta(A)}
\max_{P\in\Delta(A)^\Theta}
\Phi(P,m)                                      \\
&=
\max_{m\in\Delta(A)}W(m).
\end{aligned}
\]
Hence $P$ is optimal if and only if both inequalities in
\eqref{auxineq} hold with equality. Equivalently,
\begin{gather}
P\in
\operatorname*{argmax}_{Q\in\Delta(A)^\Theta}
\Phi(Q,m_\alpha^P),
\label{C1}\\
m_\alpha^P\in
\operatorname*{argmax}_{m\in\Delta(A)}W(m).
\label{C2}
\end{gather}
The first condition in the proposition is equivalent to \eqref{C1}. It
remains to show that the second condition is equivalent to \eqref{C2}.

Let 
\begin{equation}
  W_\theta(m)
  \equiv 
  \max_{p\in\Delta(A)}
  \Phi_\theta(p,m).
\notag 
\end{equation}
Then
$  W(m)
  =
  \sum_{\theta\in\Theta}\pi(\theta)W_\theta(m)$.
Note that $W$ is concave since, for each $\theta$,
$\Phi_\theta(p,m)$ is jointly concave in $(p,m)$ (recall that 
$D_\alpha[p:m]$ is jointly convex).


To solve \eqref{C2}, we evaluate the following one-sided directional derivatives: 
for $m,\tilde m\in\Delta(A)$, 
\[
  D^+W_\theta(m;\tilde m-m)
  \equiv 
\partial_t^+ 
    W_\theta((1-t)m+t\tilde m)\big|_{t=0},
\]
\[
  D^+W(m;\tilde m-m)
  \equiv 
\partial_t^+ 
    W((1-t)m+t\tilde m)\big|_{t=0}
=\sum_\theta\pi(\theta)  D^+W_\theta(m;\tilde m-m),
\]
where $\partial_t^+$ denotes the right derivative operator with respect to $t$.

\begin{lemma}\label{lemma2}
Let $\lambda_\theta(m)$ denote the
constant $\lambda_\theta$ determined in Proposition~\ref{q-logit proposition}. 
For every $m,\tilde m\in\Delta(A)$,
\begin{equation}
  D^+W_\theta(m;\tilde m-m)
  =
  d_\alpha
  \sum_{a\in\supp m\cup \supp\tilde m}
  (\tilde m(a)- m(a))h_\alpha\left[\exp_{q_\alpha}\left(
  \frac{u(a,\theta)-\lambda_\theta(m)}{\kappa}
  \right)\right],
\label{eq:statewise-directional-derivative}
\end{equation}
where 
\begin{equation}
  d_\alpha
  \equiv 
  \begin{cases}
  \displaystyle {2\kappa}/({1-\alpha}) & \text{ if } \alpha\ne1,\\
  \kappa & \text{ if } \alpha=1.
  \end{cases}
\notag 
\end{equation}
\end{lemma}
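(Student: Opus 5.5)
The derivative in \eqref{eq:statewise-directional-derivative} comes from an envelope (Danskin-type) argument applied to the concave value function $W_\theta(m)=\max_{p}\Phi_\theta(p,m)$. Write $m_t=(1-t)m+t\tilde m$ and let $p^*$ be a maximizer at $m$, as given by Proposition~\ref{q-logit proposition}, with constant $\lambda_\theta=\lambda_\theta(m)$.

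\emph{Lower bound.} Since $p^*$ is feasible for every $t$, we have $W_\theta(m_t)\ge\Phi_\theta(p^*,m_t)$. Because $W_\theta(m_0)=\Phi_\theta(p^*,m_0)$, it follows that $D^+W_\theta(m;\tilde m-m)\ge\partial_t^+\Phi_\theta(p^*,m_t)|_{t=0}$. I will compute this right derivative term by term, using the term $p(a)^{(1-\alpha)/2}m(a)^{(1+\alpha)/2}$ when $\alpha\neq\pm1$ and the limiting logarithmic forms otherwise. Three cases arise.
\begin{itemize}
\item For $a\in\supp m$, differentiating gives $\frac{2\kappa}{1-\alpha}(\tilde m(a)-m(a))\,(p^*(a)/m(a))^{(1-\alpha)/2}$. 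Inserting $p^*(a)/m(a)=\exp_{q_\alpha}((u(a,\theta)-\lambda_\theta)/\kappa)$ gives exactly $d_\alpha(\tilde m(a)-m(a))h_\alpha[\cdot]$. At $\alpha=1$, the reverse-KL term $m\log(m/p)$ has derivative $\log(m/p)+1$. The additive constants cancel after summation because $\sum(\tilde m-m)=0$, which produces the $\log$ form of $h_1$.
\item For $a\notin\supp m$ with $\tilde m(a)>0$, the term behaves like $t^{(1+\alpha)/2}$. Its right derivative at $0$ is $+\infty$, $0$, or finite depending on the sign of $1+\alpha$, so this is the delicate case. The value of $p^*(a)$ there is zero or not according to Proposition~\ref{q-logit proposition}, part 2(b). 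I must check that the formula $h_\alpha[\exp_{q_\alpha}(\cdot)]$, read with the conventions $1/0=+\infty$ and $0\cdot\infty=0$, matches the computed one-sided derivative.
\item Actions outside both supports contribute nothing.
\end{itemize}

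\emph{Upper bound.} I will use concavity of $\Phi_\theta$ jointly in $(p,m)$. For small $t$, take maximizers $p_t$ at $m_t$. Then
\[
W_\theta(m_t)-W_\theta(m)\le\Phi_\theta(p_t,m_t)-\Phi_\theta(p_t,m)\le t\,\partial_m\Phi_\theta(p_t,m_t)\cdot(\tilde m-m),
\]
where the last step is a supergradient bound in the $m$-argument, evaluated at $m_t$. The maximizer correspondence is upper hemicontinuous by Berge. Moreover, the explicit $q$-exponential formulas, with $\lambda_\theta(m)$ continuous (it is the root of a strictly monotone equation such as \eqref{normalization1}, or given by \eqref{zero m lambda}), give $p_t\to p^*$ along the relevant coordinates. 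Dividing by $t$ and letting $t\downarrow0$ should then yield the matching upper bound.

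\emph{Main obstacle.} The hard part is the boundary coordinates $a\notin\supp m$, in particular the regime $\alpha>-1$ where an outside action receives mass, and the non-uniqueness in case 2(b). There the $m$-derivative of $\Phi_\theta$ can blow up and the continuity of $p_t$ is not automatic. My first attempt will be a direct two-sided estimate on that single coordinate. For the action entering $\supp m_t$, I will compute the optimal $p_t(a)$ explicitly from \eqref{activeKKT}: it scales like $t$ times a constant determined by $\lambda_\theta$. This should give the exact contribution $d_\alpha\tilde m(a)h_\alpha[\exp_{q_\alpha}((u(a,\theta)-\lambda_\theta)/\kappa)]$, including the value $0$ when the bracket in $\exp_{q_\alpha}$ vanishes, which is the case $u(a,\theta)=\nu_\theta$ for $q_\alpha>1$.
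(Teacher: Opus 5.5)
\textbf{There is a genuine gap.} Your sandwich does not close on exactly the coordinates that give the lemma its content: entering actions $a\in\supp\tilde m\setminus\supp m$. These are the directions Proposition~\ref{main result prop} needs in order to compare $m_\alpha^P$ with every $b\in A$.

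First, your upper-bound step is reversed. For fixed $p_t$, the map $g(s)=\Phi_\theta(p_t,m_s)$ is concave, so $g(t)-g(0)\ge t\,g'(t)$, not $\le$. A supergradient bound from above must use the slope at $s=0$, that is, at $m$. At an entering action that bound is vacuous for $\alpha\le1$: either the slope is $+\infty$, or already $\Phi_\theta(p_t,m)=-\infty$ because $p_t(a)>0=m(a)$. For $\alpha>1$ it is not sharp.

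Computing the increment exactly does not rescue it. Take $\alpha>-1$, $\alpha\ne1$, and write $p_t(a)=t\tilde m(a)E$ with $E=\exp_{q_\alpha}((u(a,\theta)-\lambda_\theta)/\kappa)$. Each entering action then contributes $\frac{4\kappa}{1-\alpha^2}\tilde m(a)E^{(1-\alpha)/2}$ instead of $\frac{2\kappa}{1-\alpha}\tilde m(a)E^{(1-\alpha)/2}$. The reason is that $W_\theta(m)\ge\Phi_\theta(p_t,m)$ loses a first-order term. At $m$, the mass $t\tilde m(a)E$ on $a$ earns $u(a,\theta)$ rather than the multiplier $\nu_\theta$. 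By \eqref{activeKKT}, the missing piece is $-\tilde m(a)E(\nu_\theta-u(a,\theta))=-\frac{2\kappa}{1+\alpha}\tilde m(a)E^{(1-\alpha)/2}$, which is exactly the discrepancy.

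The lower bound fails symmetrically. With $p^*$ frozen, an entering action with $p^*(a)=0$ contributes:
\begin{itemize}
\item $0$ when $\alpha<1$, instead of $d_\alpha\tilde m(a)h_\alpha[E]$, which is typically positive;
\item $-\infty$ when $\alpha\ge1$, since $D_\alpha[p^*:m_t]=+\infty$ once $m_t(a)>0=p^*(a)$.
\end{itemize}
Note that $p^*(a)=0$ is forced for $\alpha\le-1$ and typical for $\alpha>-1$. So both bounds are sharp only when $\supp\tilde m\subseteq\supp m$, and the ``direct two-sided estimate'' you defer is in fact the whole proof. Relatedly, when $p^*(a)>0=m(a)$ the right derivative is $+\infty$ for $\alpha\le1$ and $0$ for $\alpha>1$, matching $h_\alpha[+\infty]$. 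The value $0$ you anticipate at a vanishing bracket is therefore correct only for $\alpha>1$.

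Second, continuity of $t\mapsto\lambda_\theta(m_t)$ at $t=0$ does not follow from monotonicity of \eqref{normalization1}. For $t>0$ the equation contains the entering terms $t\tilde m(a)\exp_{q_\alpha}((u(a,\theta)-\lambda)/\kappa)$. When $q_\alpha>1$ these blow up as $\lambda\downarrow u(a,\theta)-2\kappa/(1+\alpha)$, so $\bar\lambda_{\theta m_t}$ need not tend to $\bar\lambda_{\theta m}$. In addition, $\bar\lambda^0_{\theta m_t}$ can jump at $t=0$. This is why the paper compares $\bar\lambda_{\theta m_t}$ with $\bar\lambda^0_{\theta m}$ case by case.

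The paper avoids the boundary envelope altogether:
\begin{itemize}
\item For $t\in(0,1)$, $\supp m_t$ is the fixed set $\supp m\cup\supp\tilde m$, so no coordinate is singular. Danskin's theorem then gives the formula at $m_t$ with $\lambda_\theta(m_t)$.
\item It then shows $\lambda_\theta(m_t)\to\lambda_\theta(m)$.
\item Finally, concavity of $t\mapsto W_\theta(m_t)$ makes the right derivative right-continuous, which transfers the formula to $t=0$.
\end{itemize}
If you want to keep a two-sided estimate instead, you must perturb $p$ jointly with $m$. For the lower bound, put mass $m_t(a)E$ on entering actions and renormalize. For the upper bound, compare with a rule at $m$ that moves $p_t$'s entering mass back into $\supp m$, and use \eqref{activeKKT} to turn $\frac{4\kappa}{1-\alpha^2}$ into $d_\alpha$.
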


\begin{proof}
We write
$
  d\equiv \tilde m-m$ and 
  $m_t \equiv m+td=(1-t)m+t\tilde m$ for each $t\in[0,1]$.
Then, for each $t\in (0,1)$,
\[
\supp m_t=  S \equiv \supp m\cup\supp\tilde m,
\]
and $d(a)=\tilde m(a)-m(a)=0$ for all $a\notin S$. 
We define
$
  v(t)\equiv W_\theta(m_t)$ 
  and evaluate $\partial_t^+ v(t)$.

We use the following Danskin's theorem. Let \(K\) be compact, let
\(I\subset\mathbb R\) be an interval, and let \(f:K\times I\to\mathbb R\)
be continuous. Suppose that \(f(x,\cdot)\) is right-differentiable for each
\(x\in K\), and that \(\partial_t^+f(x,t)\) is continuous in \((x,t)\).
Then, for \(w(t)=\max_{x\in K}f(x,t)\),
\[
  \partial_t^+w(t)
  =
  \max_{x\in\argmax_{x'\in K}f(x',t)}
  \partial_t^+f(x,t).
\]

We apply Danskin's theorem to
$
v(t)=  W_\theta(m_t)
  =
  \max_{p\in\Delta(A)}
  \Phi_\theta(p,m_t)$ for $t\in(0,1)$. Note that only the coordinates in $S$
vary along $m_t=m+td$. 
Danskin's theorem gives
\[
\partial_t^+ v(t)
=  \partial_t^+ W_\theta(m_t)  =
  \max_{p\in  \arg\max_{p'} 
  \Phi_\theta(p',m_t)}
\partial_t^+ \Phi_\theta(p,m_t).
  \]
Since 
\[
\partial_t^+ \Phi_\theta(p,m_t)
  =
d_\alpha
  \sum_{a\in S}
  d(a)
  h_\alpha\left[\frac{p(a)}{m_t(a)}\right],
\]
Proposition \ref{q-logit proposition} yields 
\begin{align*}
\partial_t^+ v(t)
&=  d_\alpha
  \sum_{a\in S}
  d(a)
    h_\alpha\left[  \exp_{q_\alpha}\left(
  \frac{u(a,\theta)-\lambda_\theta(m_t)}{\kappa}
  \right)\right].
\label{eq:lemma2-derivative-along-segment}
\end{align*}

We show $\lim_{t\to 0}\lambda_\theta(m_t)=\lambda_\theta(m)$. 
First, suppose that either $\alpha\le -1$, or
$\alpha>-1$ and $S_m=A$.  In these cases, $\lambda_\theta(m_t)=\bar\lambda_{\theta m_t}$, which is continuous in $t$. Thus,  $\lim_{t\to 0}\lambda_\theta(m_t)=\lambda_\theta(m)$.

Next, suppose that
$
  \alpha>-1$ and $S_m=\supp m\subsetneq A$. 
Then, $\lambda_\theta(m_t)=\max\{\bar  \lambda_{\theta m_t},\bar  \lambda_{\theta m_t}^0\}$ by Proposition \ref{q-logit proposition}. 
Note that 
\[
\bar \lambda_{\theta m_t}^0
=\begin{cases}
\bar \lambda_{\theta m}^0=\displaystyle	\max_{a\not\in S_m}u(a,\theta)-2\kappa/(1+\alpha)&\text{ if }t=0,\\
\displaystyle	\max_{a\not\in S}u(a,\theta)-2\kappa/(1+\alpha)&\text{ if }t\in (0,1).\\
\end{cases}
\]
Since $S\supset S_m$, $\bar \lambda_{\theta m_t}^0\leq \bar \lambda_{\theta m}^0$. Define
\[
  \nu_\theta(m_t)
  \equiv 
  \lambda_\theta(m_t)+\frac{2\kappa}{1+\alpha}.
\]
We consider two cases.
\begin{enumerate}[\text{Case}  1:]
	\item Suppose that $\bar \lambda_{\theta m_t}^0=\bar \lambda_{\theta m}^0$ for $t\in (0,1)$. Then, $\lambda_\theta(m_t)$ is continuous in $t$ since $\bar\lambda_{\theta m_t}$ is continuous in $t$. 
	\item Suppose that $\bar \lambda_{\theta m_t}^0< \bar \lambda_{\theta m}^0$ for $t\in (0,1)$. Since 
	$\max_{a\not\in S_m}u(a,\theta)>	\max_{a\not\in S}u(a,\theta)$, there exists $a^*\in S\setminus S_m$ with $u(a^*,\theta)=\max_{a\not\in S_m}u(a,\theta)$. This requires that $\bar \lambda_{\theta m_t}>\bar \lambda_{\theta m}^0>\bar \lambda_{\theta m_t}^0$ by the definition of $\bar \lambda_{\theta m_t}$, and thus $\lambda_\theta(m_t)=\bar \lambda_{\theta m_t}$. 
	\begin{itemize}
	\item If $\bar \lambda_{\theta m}\geq \bar \lambda_{\theta m}^0$, then $\lambda_\theta(m_t)=\bar \lambda_{\theta m_t}\to \bar \lambda_{\theta m}=\lambda_\theta(m)$ as $t\to 0$.
	\item If $\bar \lambda_{\theta m}< \bar \lambda_{\theta m}^0$, then $\lambda_\theta(m)=\bar \lambda_{\theta m}^0$. 
    Let \(l\) be an arbitrary cluster point of \(\lambda_\theta(m_t)\) as \(t\to 0\).
If \(l>\bar\lambda_{\theta m}^0\), then along a convergent subsequence,
\[
  1=\sum_{a\in S}
  m_t(a)
  \exp_{q_\alpha}\left(
  \frac{u(a,\theta)-\lambda_\theta(m_t)}{\kappa}
  \right)
  \to \sum_{a\in S_m}
  m(a)
  \exp_{q_\alpha}\left(
  \frac{u(a,\theta)-l}{\kappa}
  \right)
\]
and 
\[
\sum_{a\in S_m}
  m(a)
  \exp_{q_\alpha}\left(
  \frac{u(a,\theta)-l}{\kappa}
  \right)<\sum_{a\in S_m}
  m(a)
  \exp_{q_\alpha}\left(
  \frac{u(a,\theta)-\bar\lambda_{\theta m}^0}{\kappa}
  \right)<1,
\]
which is a contradiction. Thus, we must have $\lim_{t\to 0} \lambda_\theta(m_t)=\bar \lambda_{\theta m}^0$. 
	\end{itemize}
\end{enumerate}
In all cases,  $\lim_{t\to 0}\lambda_\theta(m_t)=\lambda_\theta(m)$, and thus 
\begin{align*}
\lim_{t\to 0}\partial_t^+ v(t)
&= d_\alpha
  \sum_{a\in S}
  d(a)
      h_\alpha\left[\exp_{q_\alpha}\left(
  \frac{u(a,\theta)-\lambda_\theta(m)}{\kappa}
  \right)\right]\\  &=
  d_\alpha
  \sum_{a\in S}
  (\tilde m(a)-m(a))      h_\alpha\left[\exp_{q_\alpha}\left(
  \frac{u(a,\theta)-\lambda_\theta(m)}{\kappa}
  \right)\right].
\end{align*}

To complete the proof, it suffices to show that
$\partial_t^+v(0)=c\equiv \lim_{t\downarrow0}\partial_t^+v(t)$.
Let $\phi(t)\equiv (v(t)-v(0))/t$.  It suffices to show that
$\lim_{t\downarrow0}\phi(t)=c$.
Fix $\varepsilon>0$.  Then there exists $\delta>0$ such that
$c-\varepsilon<\partial_t^+v(t)<c+\varepsilon$ for all
$t\in(0,\delta)$.  For such $t$, concavity gives
$\partial_t^+v(t)\le \phi(t)$, and hence
$\phi(t)>c-\varepsilon$.
On the other hand, if $0<\eta<t<\delta$, concavity gives
\[
  \frac{v(t)-v(\eta)}{t-\eta}
  \le \partial_t^+v(\eta)<c+\varepsilon .
\]
Letting $\eta\downarrow0$ yields $\phi(t)\le c+\varepsilon$.  Therefore, for all
sufficiently small $t>0$,
$c-\varepsilon<\phi(t)\le c+\varepsilon$.  Since $\varepsilon>0$ is
arbitrary, $\lim_{t\downarrow0}\phi(t)=c$.  
\end{proof}

Summing \eqref{eq:statewise-directional-derivative} over states gives
\begin{equation}
  D^+W(m;\tilde m-m)
  =
  d_\alpha
  \sum_{a\in\supp m\cup \supp\tilde m}
  (\tilde m(a)- m(a))\sum_{\theta}\pi(\theta)h_\alpha\left[\exp_{q_\alpha}\left(
  \frac{u(a,\theta)-\lambda_\theta(m)}{\kappa}
  \right)\right].
\notag
\end{equation}
By concavity, \eqref{C2} is equivalent to
\begin{equation}
  D^+W(m_\alpha^P;\tilde m-m_\alpha^P)\le0
  \qquad
  \text{for every }\tilde m\in\Delta(A).
\label{directionalC2}
\end{equation}

Suppose first that $\alpha\le1$.  Then $d_\alpha>0$.  Hence \eqref{directionalC2} is equivalent to
\begin{align*}
\sum_{a\in\supp\tilde m}& 
  \tilde m(a)\sum_\theta\pi(\theta)h_\alpha\left[\exp_{q_\alpha}\left(
  \frac{u(a,\theta)-\lambda_\theta(m_\alpha^P)}{\kappa}
  \right)\right]\\
 & \le
  \sum_{a\in\supp m_\alpha^P}
  m_\alpha^P(a)\sum_\theta\pi(\theta)h_\alpha\left[\exp_{q_\alpha}\left(
  \frac{u(a,\theta)-\lambda_\theta(m_\alpha^P)}{\kappa}
  \right)\right]
	\end{align*}
for all $\tilde m$. 
This holds if and only if
\begin{align*}
\sum_{a\in\supp m_\alpha^P}
  m_\alpha^P(a)&\sum_\theta\pi(\theta)h_\alpha\left[\exp_{q_\alpha}\left(
  \frac{u(a,\theta)-\lambda_\theta(m_\alpha^P)}{\kappa}
  \right)\right]\\
&  =
  \max_{b\in A}\sum_\theta\pi(\theta)h_\alpha\left[\exp_{q_\alpha}\left(
  \frac{u(b,\theta)-\lambda_\theta(m_\alpha^P)}{\kappa}
  \right)\right].
\end{align*}
This implies that every $a\in \supp m_\alpha^P$ attains the maximum in the left-hand side.

Suppose next that $\alpha>1$.  Then $d_\alpha<0$.  
Then, similarly to the above, 
\eqref{directionalC2} holds if and only if
\begin{align*}
\sum_{a\in\supp m_\alpha^P}
  m_\alpha^P(a)&\sum_\theta\pi(\theta)h_\alpha\left[\exp_{q_\alpha}\left(
  \frac{u(a,\theta)-\lambda_\theta(m_\alpha^P)}{\kappa}
  \right)\right]\\
&  =
  \min_{b\in A}\sum_\theta\pi(\theta)h_\alpha\left[\exp_{q_\alpha}\left(
  \frac{u(b,\theta)-\lambda_\theta(m_\alpha^P)}{\kappa}
  \right)\right],
\end{align*}
and every $a\in \supp m_\alpha^P$ attains the minimum in the left-hand side. 

Thus \eqref{C2} is equivalent to the second condition.  Together with the
equivalence between \eqref{C1} and the first condition, this proves the
proposition.

\subsection{Proof of Corollary \ref{supp coro}}

If $\alpha<1$, Proposition~\ref{Amari proposition} implies
$
  S_m=\bigcup_{\theta\in\Theta}S_\theta$.
When $\alpha<-1$, Proposition~\ref{q-logit proposition} implies that no action outside
$S_m$ can receive positive probability, so
$
  S_\theta\subseteq S_m$ for every $\theta\in\Theta$.

Suppose that $-1\leq\alpha<1$.  Proposition~\ref{q-logit proposition} implies that every
action in $S_m$ receives positive probability in every state.  Thus,
$
  S_m\subseteq S_\theta$ for every $\theta\in\Theta$. 
Since Proposition~\ref{Amari proposition} also implies
$
  S_\theta\subseteq
  \bigcup_{\tau\in\Theta}S_\tau=S_m$, 
we obtain
$
  S_\theta=S_m$ for every $\theta\in\Theta$.

Suppose that $\alpha\geq1$.  Since an optimal stochastic
choice rule has finite $\alpha$-information, Proposition~\ref{Amari proposition} implies
$
  S_m=\bigcap_{\theta\in\Theta}S_\theta$.
Therefore,
$  S_m\subseteq S_\theta$ for every $\theta\in\Theta$.

The remaining statements directly follow from Proposition~\ref{q-logit proposition} and the definition of the $q$-exponential function.

\subsection{Proof of Proposition \ref{prop:guess-state}}

Write 
$
q\equiv q_\alpha$. 
We show that there exists a symmetric choice rule that satisfies the optimality condition and characterize it. 
Let $P$ be a symmetric choice rule. 
For each action $a$, the vector $P(a)=(P_\theta(a))_{\theta\in\Theta}$ is a permutation of
$
(\rho,(1-\rho)/(n-1),\ldots,(1-\rho)/(n-1))$. 
Hence $M_\alpha(P(a))$ is the same for every $a$, and the $\alpha$-integration is uniform:
$
m_\alpha^P(a)={1}/{n}$ for every $a\in A$. 
Therefore, by Proposition~\ref{q-logit proposition}, 
 $P$ is statewise-optimal if and only if 
\begin{equation}
\rho
=
\frac{1}{n}
\exp_q\left(
\frac{w-\lambda}{\kappa}
\right),
\qquad
\frac{1-\rho}{n-1}
=
\frac{1}{n}
\exp_q\left(
\frac{-\lambda}{\kappa}
\right),
\label{rhotau}
\end{equation}
where $\lambda$ is a solution to 
\begin{equation}
G_w(\lambda)\equiv\exp_q\left(
\frac{w-\lambda}{\kappa}
\right)
+
(n-1)
\exp_q\left(
\frac{-\lambda}{\kappa}
\right)=n.
\notag\label{normal guess}	
\end{equation}
Moreover, by symmetry, the second condition in Proposition~\ref{main result prop} is automatically satisfied; hence, for symmetric rules, optimality is equivalent to the statewise optimality condition in Proposition~\ref{q-logit proposition}.
Since $\lambda$ exists and is unique, the symmetric optimal choice rule is unique.

First suppose that $\alpha<-1$, so $q<1$.
Then,
\[
\exp_q\left(
\frac{-\lambda}{\kappa}
\right)>0
\quad\Longleftrightarrow\quad
1-(1-q)\frac{\lambda}{\kappa}>0
\quad\Longleftrightarrow\quad
\lambda<\lambda_0\equiv\frac{\kappa}{1-q}.
\]
Since $G_w(\lambda)$ is strictly decreasing in $\lambda$, 
$\lambda\geq \lambda_0$ if and only if $G_w(\lambda_0)\geq n$, which is rewritten as 
\[
\left(
(1-q)\frac{w}{\kappa}
\right)^{1/(1-q)}
\geq 
n\quad \Leftrightarrow\quad 
w\geq \frac{\kappa n^{1-q}}{1-q}
=
\bar w_\alpha.
\]
In this case,  $\rho=1$. 
If $w<\bar w_\alpha$, then $\rho<1$. 

When \(\rho<1\), 
 \eqref{eq:guess-state-response} and \eqref{eq:guess-state-response-q1} are derived from \eqref{rhotau}. To see this, note that \eqref{rhotau} is equivalent to
\[
\log_q(n\rho)=\frac{w-\lambda}{\kappa},
\qquad
\log_q\left(\frac{n(1-\rho)}{n-1}\right)=\frac{-\lambda}{\kappa}.
\]
Eliminating \(\lambda\) gives
\begin{align}
\frac{w}{\kappa}
&=
\log_q(n\rho)
-
\log_q\left(\frac{n(1-\rho)}{n-1}\right)=
\begin{cases}
	\dfrac{
(n\rho)^{1-q}
-
\left(\dfrac{n(1-\rho)}{n-1}\right)^{1-q}
}{1-q}& \text{ if }q\neq 1,\\
\log\left(\dfrac{(n-1)\rho}{1-\rho}\right) & \text{ if }q=1,
\end{cases}
\label{keyeq guess}
\end{align}
which implies \eqref{eq:guess-state-response} and \eqref{eq:guess-state-response-q1}, respectively.

\subsection{Proof of Corollary \ref{prop:guess-state-asymptotic}}

If \(\alpha>-1\) (i.e. \(q_\alpha>1\)), then
by \eqref{keyeq guess}, 
\[
\frac{w}{\kappa}
=
\frac{1}{q_\alpha-1}
\left\{
\left(\frac{n(1-\rho(w))}{n-1}\right)^{1-q_\alpha}
-
(n\rho(w))^{1-q_\alpha}
\right\}
\leq
\frac{1}{q_\alpha-1}
\left(\frac{n(1-\rho(w))}{n-1}\right)^{1-q_\alpha}.
\]
Since \(1-q_\alpha<0\), it follows that
\[
1-\rho(w)
\leq
\frac{n-1}{n}
(q_\alpha-1)^{-1/(q_\alpha-1)}
\left(\frac{w}{\kappa}\right)^{-1/(q_\alpha-1)}=O\left(\left(\frac{w}{\kappa}\right)^{-\frac{2}{1+\alpha}}\right).
\]

If \(\alpha=-1\) (i.e. \(q_\alpha=1\)), then 
\[
1-\rho(w)
=
\frac{n-1}{\exp(w/\kappa)+n-1}
\leq
(n-1)\exp\left(-\frac{w}{\kappa}\right)
=
O\left(\exp\left(-\frac{w}{\kappa}\right)\right).
\]

\subsection{Proof of Lemma \ref{lem:elasticity-marginal-cost-alpha}}

Put
$
r={\rho}/({1-\rho})>1$ and $
s={(1+\alpha)}/{2}>0$. 
Then
\[
\calE_{\alpha}(\rho)
=
s\frac{r^{s+1}+1}{r^s-1}.
\]
It suffices to show that this expression is strictly increasing in $s$.

Let
$
x=r^s$. 
Then $x>1$, and differentiation with respect to $s$ gives
\begin{equation}
\frac{d}{ds}
\left[
s\frac{r^{s+1}+1}{r^s-1}
\right]
=
\frac{
(rx+1)(x-1)-(r+1)x\log x
}{
(x-1)^2
}.
\label{els eq lemma1}
\end{equation}
The denominator is strictly positive. The numerator satisfies
\begin{align}
(rx+1)(x-1)-(r+1)x\log x
&=
rx(x-1-\log x)+x-1-x\log x\notag
\\
&>
x(x-1-\log x)+x-1-x\log x\notag
\\
&=
x^2-1-2x\log x>0.\notag\label{elas eq lemma}
\end{align}
The first strict inequality follows from $r>1$ and $x-1-\log x>0$ for $x>1$.
The last strict inequality holds since $x^2-1-2x\log x=0$ at $x=1$ and  
\[
\frac{d}{dx}\left(x^2-1-2x\log x\right)
=
2(x-1-\log x)>0.
\]
Therefore, \eqref{els eq lemma1} is strictly positive.




\section{Details of the estimation in Section \ref{Estimated magnitude sec}}
\label{app:individual-profile-likelihood}

This appendix records 
the numerical implementation of the profile-likelihood estimator.

\subsection*{Data}
\label{app:data}

The data are the individual choices from Experiment~1.2 of
\citet{deanNeligh2023}. 
Each trial is classified as correct or incorrect, and the reward for a correct
choice $w\in\{5,40,70,95\}$ is measured in probability points. 
For subject $i$ and reward $w$, let $n_{iw}$ be the number of trials
and $y_{iw}$ the number of correct choices.
There are
$52$ subjects, and 
 $
n_{iw}=50$ for each $i$ and $w$, so each subject contributes $200$
trials.
The aggregate empirical correct-choice frequencies are
\[
\begin{array}{c|cccc}
\text{reward} & 5 & 40 & 70 & 95\\
\hline
\text{relative frequency} & 0.671 & 0.711 & 0.720 & 0.753
\end{array}
\]
These frequencies are increasing in $w$ but only mildly so, which is
the weak responsiveness to incentives noted in the main text.

\subsection*{Response function and reparameterization}
\label{app:response}

Define 
\[
F_\alpha(\rho)
\equiv
\log_{q_\alpha}(2\rho)-\log_{q_\alpha}\qty(2(1-\rho)).
\]
Then $\rho(w|\alpha,\kappa_i)$ is the unique solution to 
$F_\alpha(\rho)=w/\kappa_i$ by \eqref{eq:guess-state-response}. 
Note that 
$\lim_{\kappa_i\to\infty}\rho(w|\alpha,\kappa_i)=1/2$
and $\lim_{\kappa_i\to 0}\rho(w|\alpha,\kappa_i)=1$ for every $w$. 
Since the range of $\kappa_i$ is not bounded, we adopt 
\[
\mu_i
\equiv
\rho(95|\alpha,\kappa_i)
\in
\left[1/2,1\right]
\]
as the nuisance parameter instead of $\kappa_i(={95}/{F_\alpha(\mu_i)})$. 
The parameter $\mu_i\in[1/2,1]$ produces the same one-dimensional family of response functions as $\kappa_i\in[0,\infty)$.
The fitted correct-choice probability is then obtained from
\begin{equation}\label{rho est eq}
\rho_{iw}
=F_\alpha^{-1}\qty(
w F_\alpha(\mu_i)/95).
\end{equation}

\subsection*{Profile likelihood}

For each subject $i$, the grouped binomial log likelihood, omitting constants that do not depend on parameters, is
\[
\ell_i(\alpha,\mu_i)
=
\sum_{w\in\{5,40,70,95\}}
\left[
y_{iw}\log \rho_{iw}
+
(n_{iw}-y_{iw})\log(1-\rho_{iw})
\right],
\]
where $\rho_{iw}$ is obtained from \eqref{rho est eq}.
For each fixed $\alpha$, the profile log likelihood is
\[
\ell_p(\alpha)
=
\sum_{i=1}^{52}
\max_{\mu_i\in[1/2,1]}
\ell_i(\alpha,\mu_i).
\]
Each subject's likelihood is maximized over the
compact interval $\mu_i\in[1/2,1]$ by a one-dimensional search. 
The profile maximum likelihood estimate
$\widehat\alpha=\arg\max_{\alpha\ge -1}\ell_p(\alpha)$ is then obtained
by a one-dimensional search over $\alpha$.

\subsection*{Estimates}
\label{app:estimates}

\begin{table}[t]
\centering
\caption{Selected values of the profile log likelihood}
\label{tab:profile-likelihood-app}
\begin{tabular}{c|cc}
\hline
$\alpha$ & $\ell_p(\alpha)$ & $2\{\ell_p(\widehat\alpha)-\ell_p(\alpha)\}$\\
\hline
$-1.0000$ & $-5273.5932$ & $349.1594$\\
$0.0000$ & $-5155.4056$ & $112.7842$\\
$1.0000$ & $-5117.5263$ & $37.0254$\\
$2.0000$ & $-5102.4182$ & $6.8093$\\
$2.2437$ & $-5100.9343$ & $3.8415$\\
$3.0000$ & $-5099.0633$ & $0.0994$\\
$3.1751$ & $-5099.0136$ & $0.0000$\\
$4.0000$ & $-5099.7455$ & $1.4639$\\
$4.7319$ & $-5100.9343$ & $3.8415$\\
$5.0000$ & $-5101.4400$ & $4.8529$\\
$6.0000$ & $-5103.7819$ & $9.5368$\\
$8.0000$ & $-5109.6427$ & $21.2583$\\
$10.0000$ & $-5115.0759$ & $32.1248$\\
\hline
\end{tabular}
\end{table}

The profile log-likelihood is maximized at
\[
\widehat\alpha=3.1751
\qquad(\text{$3.18$ in the main text}),
\qquad
\ell_p(\widehat\alpha)=-5099.01 .
\]
At the mutual-information benchmark $\alpha=-1$,
$\ell_p(-1)=-5273.6$, so the likelihood-ratio statistic against the
benchmark is $2\{\ell_p(\widehat\alpha)-\ell_p(-1)\}\approx 349$.
Table~\ref{tab:profile-likelihood-app} reports the profile log-likelihood at
selected values of $\alpha$.

At $\widehat\alpha=3.1751$, the estimated subject-level nuisance parameters are classified as in Table~\ref{tab:boundary-classification-app}. 
Forty subjects are interior, $0<\widehat\kappa_i<\infty$.
Three subjects are fitted at the boundary $\widehat\kappa_i=0$, i.e.\
$\widehat\rho_{iw}=1$ for all $w$. 
The remaining nine subjects are fitted at the opposite limit
$\widehat\kappa_i\to\infty$, i.e.\ $\widehat\rho_{iw}=1/2$ for all
$w$.\footnote{The three subjects at $\widehat\kappa_i=0$ have
identifiers $2117$, $2201$, $2208$; the nine chance-level subjects have
identifiers $2108$, $2115$, $2123$, $2124$, $2204$, $2207$, $2319$,
$2321$, $2323$.}
At either limit the implied probabilities are $1/2$ or $1$ and independent of $\alpha$.
Consequently they affect neither $\widehat\alpha$ nor the
likelihood-ratio statistic: re-estimating $\alpha$ on the forty
interior subjects alone yields the identical
$\widehat\alpha=3.1751$.
The estimate of $\alpha$ is therefore identified entirely from the
common shape of the response function across the interior subjects.

\begin{table}[t]
\centering
\caption{Boundary classification at $\widehat\alpha=3.1751$}
\label{tab:boundary-classification-app}
\begin{tabular}{c|c}
\hline
Estimated status & Number of subjects\\
\hline
$\widehat\kappa_i=\infty$ & $9$\\
$0<\widehat\kappa_i<\infty$ & $40$\\
$\widehat\kappa_i=0$ & $3$\\
\hline
\end{tabular}
\end{table}

\subsection*{Common-cost specification}

For comparison with the estimate translated from \citet{deanNeligh2023}, we also estimate the common-cost specification.
In this specification, both $\alpha$ and $\kappa$ are common across subjects.
Equivalently, there is a single value
$
\mu=\rho(95|\alpha,\kappa)$ 
rather than subject-specific values $\mu_i$.
The log likelihood is
\[
\ell^{\mathrm{cc}}(\alpha,\mu)
=
\sum_i
\sum_{w\in\{5,40,70,95\}}
\left[
y_{iw}\log \rho_w(\alpha,\mu)
+
(n_{iw}-y_{iw})\log\{1-\rho_w(\alpha,\mu)\}
\right],
\]
where
$
F_\alpha(\rho_w(\alpha,\mu))
=
{w}F_\alpha(\mu)/{95}$. 
Maximizing this likelihood gives
\[
\widehat\alpha^{\mathrm{cc}}
\simeq
23.95,
\qquad
\widehat\kappa^{\mathrm{cc}}
={95}/{F_{\widehat\alpha^{\mathrm{cc}}}(\widehat \mu^{\mathrm{cc}})}
\simeq
0.382.
\]

The common-cost log-likelihood is very flat in $\alpha$ above
$\alpha\approx 15$: $\ell^{\mathrm{cc}}$ changes
by less than one log-likelihood point over the range
$\alpha\in[20,28]$, so $\widehat\alpha^{\mathrm{cc}}$ is only weakly
identified at the top of the range.
As explained in the main text, the gap between
$\widehat\alpha^{\mathrm{cc}}\simeq 24$ and $\widehat\alpha\simeq 3.18$
reflects the inability of a single $\kappa$ to absorb subject-level
differences in the overall cost scale.

\end{appendix}


\end{document}